\def\be{\begin{equation}}
\def\ee{\end{equation}}
\def\bea{\begin{eqnarray}}
\def\eea{\end{eqnarray}}
\def\bes{\begin{eqnarray}}
\def\ees{\end{eqnarray}}
\def\bi{\begin{itemize}}
	\def\ei{\end{itemize}} 
\newtheorem{thm}{Theorem}
\newtheorem{lemma}[thm]{Lemma}
\newtheorem{prop}[thm]{Proposition}
\theoremstyle{definition}
\DeclareMathOperator{\A}{\mathcal{A}}
\DeclareMathOperator{\B}{\mathcal{B}}
\DeclareMathOperator{\M}{\mathcal{M}}
\newcommand{\pn}[1]{\left( #1 \right)}
\newcommand{\mat}[1]{\ensuremath{\begin{pmatrix} #1 \end{pmatrix}}}
\newcommand{\bigbrace}[1]{\left\{ #1 \right\}}
\newcommand\stikz[1]{\tikz[baseline=(X.base)]{#1}} 
\begin{document}
\title{Quantum Computing with Two-dimensional Conformal Field Theories}
	\author{Elias Kokkas}
	\email{ikokkas@vols.utk.edu}
\affiliation{Department of Physics and Astronomy, The University of Tennessee, Knoxville, TN 37996-1200, USA}
\author{Aaron Bagheri}
\email{bagheri@math.ucsb.edu}
\affiliation{Department of Mathematics, University of California, Santa Barbara, CA 93106, USA}
\author{Zhenghan Wang}
\email{zhenghwa@microsoft.com}
\affiliation{Microsoft Station Q and Department of Mathematics, University of California, Santa Barbara, CA 93106, USA}
\author{George Siopsis}
	\email{siopsis@tennessee.edu}
\affiliation{Department of Physics and Astronomy, The University of Tennessee, Knoxville, TN 37996-1200, USA}
\date{\today}
\begin{abstract}
Conformal field theories have been extremely useful in our quest to understand physical phenomena in many different branches of physics, starting from condensed matter all the way up to high energy. Here we discuss applications of two-dimensional conformal field theories to
fault-tolerant quantum computation based on the coset $ SU(2)_1^{\otimes k} / SU(2)_{k}$. We calculate higher-dimensional braiding matrices by considering conformal blocks involving $2N$ anyons, and search for gapped states that can be built from these conformal blocks. We introduce a gapped wavefunction that generalizes the Moore-Read state which is based on the critical Ising model, and show that our generalization leads to universal quantum computing. 
\end{abstract}
\maketitle

\section{Introduction}
Topological quantum computation was first introduced \cite{bib:1,bib:2,bib:2a} and further developed \cite{bib:3,bib:4,bib:5} as an elegant approach to fault-tolerant quantum computation which utilizes certain quasi-particles called anyons. These are exotic quasi-particles that live in two spatial dimensions and exhibit quantum statistics that are neither fermionic nor bosonic. We can distinguish between two different types of anyons, Abelian and non-Abelian. Abelian anyons are associated with the one-dimensional representation of the braid group, and were first studied in Ref.\ \cite{bib:6}. Their quantum state acquires a global phase under the exchange of two identical particles, whereas the exchange of two identical non-Abelian anyons changes their quantum state via a unitary matrix. Another important difference between Abelian and non-Abelian anyons is with regard to their fusion rules. Abelian anyons fuse into a single Abelian anyon, whereas non-Abelian anyons have multiple fusion outcomes (channels). We can store and process information using the fusion rules and braiding statistics of these anyons, respectively. However, only non-Abelian anyons can be used to implement complicated quantum gates that are not proportional to the identity. Ising and Fibonacci anyons are the simplest candidates for a topological quantum computer \cite{bib:7,bib:8}, but only the latter offers a universal set of quantum gates via braiding. Nevertheless, Abelian anyons are still useful for quantum computing tasks, such as quantum memory. We achieve
fault tolerance by encoding information non-locally and processing it using braidings that depend only on the topology of anyons. Anyons emerge as localized excitations in a topological phase of matter provided there exists an energy gap and a ground state topological degeneracy which is robust against external interactions \cite{bib:9,bib:10}. Of course, errors associated with wrong braidings can still occur.

Despite the enormous theoretical success of anyons, their physical realization is to this day a challenge. Superconductor-semiconductor nanowires are promising candidates for Majorana zero modes \cite{bib:11,bib:12} which are quasi-particles that obey the same fusion and braiding rules as the Ising anyons. Another approach is by studying systems with the Fractional Quantum Hall Effect (FQHE). Experimental data \cite{bib:13} support the emergence of Abelian anyonic excitations at the quantum Hall effect for $\nu=\frac{1}{3}$. There is some evidence that Ising anyons can be found in the $\nu=\frac{5}{2}$ quantum Hall state and Fibonacci anyons exist in the $\nu=\frac{12}{5}$ quantum Hall state. However, experimental results are not conclusive.

To better understand the FQHE and its quantum statistics, one needs to understand the wavefunction of the ground state and its quasi-hole excitations. Na\"\i vely, one would have to determine a many-body  Hamiltonian that could be diagonalized to obtain its eigenstates. Alternatively, following the pioneering work of Moore and Read \cite{bib:14}, one can construct wavefunctions for these states using conformal blocks of certain conformal field theories (CFTs). For example, one can construct the Laughlin wavefunction  \cite{bib:15} that describes the FQHE at filling $\nu=\frac{1}{q}$, with $q$ an odd integer number, and supports Abelian anyons, using a CFT with central charge $c=1$ consisting of a free massless boson. Additionally, Moore and Read \cite{bib:14} constructed the Pfaffian wavefunction, which obeys non-Abelian statistics, using the critical Ising CFT minimal model $\mathcal{M}(4,3)$ to describe the $\nu=\frac{5}{2}$ FQHE. This spurred a lot of activity in the subject \cite{bib:16,bib:17,bib:18,bib:19,bib:20}. Similar proposals suggest that the $\nu=\frac{5}{2}$ state is described by the anti-Pfaffian state \cite{bib:21,bib:22}. 

In order to build a wavefunction describing a number of non-Abelian quasi-holes from CFT correlators, there are a few caveats to address. Since the conformal blocks for non-Abelian anyons are multi-valued functions, the position coordinates of the non-Abelian quasi-holes can only play the role of parameters in the wavefunction of the system. The role of coordinates is assumed by the position of Abelian particles which must be present in the system. There presence induces singularities which are removed by the inclusion of the Jastrow factor which describes an independent system defined by a different CFT. Moreover, as was emphasized in \cite{bib:18}, for the braiding statistics of the wavefunction to match the monodromy around the branch points of the multi-valued part of the function, we need to ensure that the Berry holonomy vanishes. Fortunately, it was later demonstrated in \cite{bib:20} using the plasma analogy that indeed the Berry holonomy vanishes for the Moore-Read (MR) wavefunction. Attempts to construct MR-like wavefunctions for minimal models $\mathcal{M}(m+1,m)$ with $m > 3$ were also studied in \cite{bib:20}, but it was realized that they cannot describe gapped states because the plasma is not screened.

In this work, we propose an alternative generalization of the MR wavefunction based on the coset $SU(2)_1^{\otimes k} /SU(2)_k $, where $SU(2)_k$ is the Wess-Zumino-Witten (WZW) model based on the gauge group $SU(2)$ at level $k$. For $k=2$, it reduces to the MR wavefunction, because the critical Ising model $\mathcal{M}(4,3)$ is isomorphic to the coset $SU(2)_1\otimes SU(2)_1 /SU(2)_2 $ \cite{bib:23,bib:24,bib:25}. Unlike in the $k=2$ case, higher values of $k$ lead to universal quantum computing involving only braiding. Using the plasma analogy, we show that our wavefunction is gapped, ensuring fault-tolerant quantum computing. It should be pointed out that the minimal model $\mathcal{M} (k+2,k+1)$ can be constructed from a similar coset, $SU(2)_{k-1}\times SU(2)_1/SU(2)_{k}$. However, even though braiding alone can be shown to lead to universal quantum computation for $k>2$, no gapped state has been constructed. In our coset construction based on $SU(2)_1^{\otimes k} /SU(2)_k $, there is a primary field of conformal dimension $\frac{1}{2}$ for all $k$, leading to a gapped state. It would be interesting to identify a physical realization of our theoretical construct.

Our discussion is organized as follows. In Section  \ref{sec:II}, we review the pertinent features of Virasoro minimal models using the Coulomb gas formalism. In Section \ref{sec:III}, we calculate the braiding and fusion matrices for four-point and six-point amplitudes. In Section \ref{sec:coamp}, we calculate amplitudes in the coset CFT. In Section \ref{sec:IV}, we discuss braiding from the point of view of anyon models. In Section \ref{sec:V}, we construct a wavefunction that generalizes the MR wavefunction and leads to universal fault-tolerant quantum computing. Finally, in Section \ref{sec:con} we present our conclusions. Details of our calculations can be found in Appendices \ref{app:A} (exchange matrices) and \ref{app:B} (amplitudes in the $SU(2)_q$ WZW model).

\section{Virasoro minimal models} \label{sec:II}

In this Section, we review the salient features of the Virasoro minimal models. The minimal model $\mathcal{M} (k+2,k+1)$ shares common features with the coset CFT $SU(2)_1^{\otimes k} /SU(2)_k $ that we are interested in, such as a set of primary fields ($\Phi_{(1,s)}$ with $1 \leq s \leq k+1$). These fields have the same conformal dimensions, fusion rules, and braiding statistics in both CFTs. For $k=2$, the coset CFT coincides with the critical Ising model. For all $k$, the coset CFT contains a primary field $\psi$ of conformal dimension $h_\psi = \frac{1}{2}$, which is not present in minimal models, except for $k=2$. The absence of a similar field in minimal models with $k >2$ prevents us from using them for fault-tolerant universal quantum computation. As we will show in Section \ref{sec:V}, the field $\psi$ obeys Abelian fusion rules, which is crucial for the construction of a gapped wavefunction based on the coset CFT $SU(2)_1^{\otimes k} /SU(2)_k $.  

\begin{table}[b]
\caption{\label{table:1}%
Charge and dimension of primary fields in the minimal model $\mathcal{M}(k+2,k+1)$ and the coset CFT $SU(2)_1^{\otimes k} /SU(2)_k $.}
\begin{ruledtabular}
\begin{tabular}{lccc}
\textrm{Primary field}&
\textrm{Symbol}&
\textrm{Dimension}&
\textrm{Charge}\\
\colrule
$\Phi_{(1,1)}$ & $\mathbb{I}$ & 0 & 0 \\
$\Phi_{(1,2)}$ & $\sigma $& $\frac{k-1}{4(k+2)}$ & $\frac{k+1}{2\sqrt{(k+1)(k+2)}}$ \\
$\Phi_{(1,3)}$ & $\varepsilon$ & $\frac{k}{k+2}$ & $\frac{k+1}{\sqrt{(k+1)(k+2)}}$ \\
$\Phi_{(1,4)}$ & $\varepsilon'$  & $\frac{3(3k+1)}{4(k+2)}$ & $\frac{3(k+1)}{2\sqrt{(k+1)(k+2)}}$ \\
\vdots & & \vdots & \vdots \\
$\Phi_{(1,k+1)}$ & & $\frac{k(k-1)}{4}$ & $\frac{k(k+1)}{2\sqrt{(k+1)(k+2)}}$
\end{tabular}
\end{ruledtabular}
\end{table}

In the Coulomb gas formalism \cite{bib:26,bib:27,bib:28}, one starts with a massless scalar field $\varphi$ in two spacetime dimensions, and adds a background charge $\alpha_0$ at infinity, which shifts the central charge to $c=1-24\alpha_0^2$. In terms of the integer $k$ the background charge is given by
\be \label{eq:II-1} \alpha_0 = \frac{1}{2\sqrt{(k+1)(k+2)}} \ . \ee
Physical observables, such as spin and energy density, are represented by primary fields expressed as vertex operators $\Phi_\alpha(\eta,\bar{\eta})= e^{i \sqrt{2}\alpha \varphi (\eta,\bar{\eta})}$, including both a holomorphic and an anti-holomorphic part,
where $\alpha$ is the charge. The primary field $\Phi_\alpha$ has conformal dimension $h_\alpha=\alpha^2-2 \alpha_0 \alpha$. The corresponding observable may also be represented by the conjugate vertex operator $\tilde{\Phi}_\alpha \equiv \Phi_{2\alpha_0-\alpha}$, which has the same conformal dimension as $\Phi_\alpha$.

The minimal model $\mathcal{M} (k+2,k+1)$ possesses a finite number of primary fields labeled by a pair of integers $(r,s)$, where $r=1,\dots,k$, and $s=1,\dots,k+1$. The charge and conformal dimension of the primary field $\Phi_{(r,s)}$ are given, respectively, by
\be \label{eq:II-2}  \alpha_{(r,s)} =  \frac{(1-r)(k+2)-(1-s)(k+1)}{2\sqrt{(k+1)(k+2)}} \ , \ee
\be \label{eq:II-3}  h_{(r,s)} = \frac{[r(k+2) - s(k+1)]^2-1}{4(k+1)(k+2)} \ . \ee
The conjugate field is $\tilde\Phi_{(r,s)}=\Phi_{(k+1-r,k+2-s)}$.

Correlators of primary fields can be split into holomorphic and antiholomorphic parts by splitting the scalar field $\varphi (\eta,\bar{\eta}) = \varphi (\eta) + \bar{\varphi} (\bar{\eta})$. For quantum computing, we are interested in chiral CFTs. We will concentrate on the holomorphic part of correlators. Of particular importance are the primary fields with $r=1$. They are common features in the minimal model $\mathcal{M} (k+2,k+1)$ and the coset CFT $SU(2)_1^{\otimes k} / SU(2)_k$ and form a closed algebra thanks to the fusion rules 
\be\label{eq:II-4}
\Phi_{(1,s)} \otimes \Phi_{(1,s')} = \sum_{ s''\stackrel{2}{=} |s'-s|+1 }^{\text{min} (s+s'-1,2k+3-s-s')} \Phi_{(1,s'')} \ .
\ee
where $\stackrel{2}{=}$ denotes incrementing the summation variable by $2$.
More generally, the fusion rules are
\bea\label{eq:fusion}
	&& \Phi_{(r,s)} \otimes \Phi_{(r',s')} = \nonumber\\ &&\ \ \ \ \sum_{r'' \stackrel{2}{=} |r'-r|+1}^{\min(r+r'-1, 2q-1-r-r')} \sum_{s'' \stackrel{2}{=} |s'-s|+1}^{\min(s+s'-1, 2p-1-s-s')} \Phi_{(r'',s'')} \ . \nonumber\\
\eea
For $k \geq 2$, the charge and dimension of these primary fields are summarized in Table \ref{table:1}. We will build Hilbert spaces of qubits based on correlators of the holomorphic part of the primary field $\Phi_{(1,2)}$, which we denote by $\sigma (z)$. Other fields $\Phi_{(1,s)}$ contribute to correlators of $\sigma$ as intermediate states.
We denote their holomorphic part by $\varepsilon, \varepsilon', \dots$, for $s=3,4,\dots$, respectively. Conjugate fields are denoted by $\tilde{\sigma}, \tilde{\varepsilon}, \tilde{\varepsilon'}, \dots$, and their anti-holomorphic counterparts by $\bar{\sigma}, \bar{\varepsilon}, \bar{\varepsilon'}, \dots$. Thus, e.g., $\Phi_{(1,2)} (\eta,\bar{\eta}) = \sigma (\eta) \bar{\sigma} (\bar{\eta})$.

The anomalous $U(1)$ symmetry of the massless scalar field coupled to the background charge leads to a charge neutrality condition which states that the total charge in a correlator has to be equal to twice the background charge,
\be \label{eq:II-5}  \sum_i \alpha_i= 2\alpha_0 \ .\ee
To define non-vanishing correlators of physical observables that obey the neutrality condition \eqref{eq:II-5}, one introduces screening operators $Q_\pm$ of zero conformal dimension,
\be \label{eq:II-6} Q_\pm = \int d^2 w V_\pm (w) \bar{V}_{\pm} (\bar{w}) \ , \ \ V_\pm (w) = e^{i\sqrt{2} \alpha_\pm \varphi (w)} \ ,\ee
of charge $\alpha_+ = \sqrt{\frac{k+2}{k+1}}$  and $\alpha_- =- \sqrt{\frac{k+1}{k+2}}$, respectively.

The correlation function of $2N$ primary fields $\Phi_{(1,2)}$ is given by 
\be \label{eq:II-7}  G^{(2N)}(\bm{\eta}, \bar{\bm{\eta}} ) =  \left\langle \bm{\sigma}_1 \cdots \bm{\sigma}_{2N-1} \tilde{\bm{\sigma}}_{2N} Q_-^{N-1}  \right\rangle \ee
where $\bm{\eta} = (\eta_1,\dots, \eta_{2N})$, $\bm{\sigma} = \Phi_{(1,2)}$, and $\bm{\sigma}_j = \bm{\sigma} (\eta_j,\bar{\eta}_j)$. To define this correlator, we inserted $n-1$ screening operators $Q_-$ and used the conjugate field for one of the primary fields.

The non-chiral correlator \eqref{eq:II-7} can be split into holomorphic and antiholomorphic parts as \cite{bib:29}
\be \label{eq:II-8}  G^{(2N)}(\bm{\eta}, \bar{\bm{\eta}} ) = \sum_\mu | \mathcal{F}_\mu^{(2N)} (\bm{\eta} ) |^2 \ee
where we sum over the conformal blocks of the chiral model labeled by $\mu$. The chiral correlator (conformal block) is given by
\be \label{eq:II-9}  \mathcal{F}_\mu^{(2N)} (\bm{\eta} ) = \sqrt{N_\mu} \oint_\mu d^{N-1}\bm{w} \, \mathcal{I}^{(2N)} ( \bm{\eta}; \bm{w}) \ee 
where
\be \label{eq:II-10}  \mathcal{I}^{(2N)} ( \bm{\eta}; \bm{w}) = \left\langle \sigma_1  \cdots \sigma_{2N-1} \tilde{\sigma}_{2N} \prod_{j=1}^{N-1} V_- (w_j)  \right\rangle \ee
where $\bm{w} = (w_1, \dots,w_{N-1})$, $\sigma_j = \sigma(\eta_j)$, and $N_\mu$ are normalization constants determined by matching the expressions in Eqs.\ \eqref{eq:II-7} and \eqref{eq:II-8}.
The conformal block \eqref{eq:II-9} is obtained by performing $N-1$ contour integrals. One distinguishes between different conformal blocks by the position of the contours of integration; $\mu$ labels the collective choice.


\section{Braiding and Fusion Matrices}\label{sec:III}
As discussed in the previous section, chiral amplitudes are not single-valued functions since they depend on the choice of contours of integration. Conformal blocks form a basis for these amplitudes of dimensionality that depends on the number of primary fields and therefore on the integer $k$ labeling the CFT.  This basis is mapped onto the basis for the Hilbert space of qubits in quantum computation.

One can deduce the number of independent conformal blocks directly from the fusion rules of the model. The exchange of two primary fields $\sigma$ at positions $\eta_i$ and $\eta_j$ is equivalent to a change of basis from $\mathcal{F}_\mu$ to $\mathcal{F}'_\mu$ via an exchange matrix,
\be\label{eq:III-1} \mathcal{F}'_\mu =\sum_\nu (R_{ij})_{\mu \nu} \mathcal{F}_\nu \ .\ee
These exchange matrices result into braiding and fusion matrices \cite{bib:28,bib:30,bib:31} that can be mapped onto quantum gates. We discuss how this is done in detail for four- and six-point amplitudes.

\subsection{Four-point amplitudes}
The simplest nontrivial amplitude is the four-point correlation function involving $\sigma$ fields. There are two conformal blocks associated with this four-point function for all values of the integer $k$, as can be easily deduced from the fusion rules \eqref{eq:II-4}. They form a two-dimensional Hilbert space corresponding to a single qubit. The amplitude \eqref{eq:II-10} with $N=2$ can be simplified using conformal invariance which allows one to fix $\eta_1 \rightarrow 0$, $\eta_2 \rightarrow x$, $\eta_3 \rightarrow 1$, and $z_4 \rightarrow \infty$, where $x = \frac{\eta_{12} \eta_{34}}{\eta_{13} \eta_{24}}$ is the anharmonic ratio with $\eta_{ij} = \eta_i - \eta_j$. Omitting a factor $\left( \frac{\eta_{13} \eta_{24} }{\eta_{12} \eta_{23} \eta_{34} \eta_{41} } \right)^{2h_\sigma}$, we may write
\be  \label{eq:III-3} \mathcal{I}^{(4)} (x; w) =  \left\langle \sigma (0)  \sigma (x) \sigma (1) \tilde{\sigma} (\infty)  V_-  (w)  \right\rangle \ .\ee
The non-chiral four-point function can be written in terms of conformal blocks as
\be  \label{eq:III-6} G^{(4)} (x,\bar{x}) = |\mathcal{F}^{(4)}_1 (x)|^2 + |\mathcal{F}^{(4)}_2 (x)|^2 \ . \ee
To define the two conformal blocks we need to carefully choose the contour of integration in order to avoid the branch points and singularities at $0, x, 1, \infty$.  This can be done by choosing two branch cuts along the real axis, one that goes from $0$ to $z$ and another one from $1$ to $\infty$. We obtain two different contours which encircle $(0,x)$ and $(1,\infty)$, respectively. After shrinking these contours, the two conformal blocks are defined as
\bea  \label{eq:III-7} \mathcal{F}^{(4)}_1 (x) &=& \sqrt{N_1} [x (1-x)]^{\frac{k+1}{2(k+2)}} \nonumber\\
&& \times \int_0^x dw\,   [w (x-w) (1-w)]^{-\frac{k+1}{k+2}} \nonumber \\ 
\mathcal{F}^{(4)}_2 (x) &=&  \sqrt{N_2} [x (1-x)]^{\frac{k+1}{2(k+2)}}    \nonumber \\ 
&& \times \int_1^\infty dw \,  [w (w-x) (w-1)]^{-\frac{k+1}{k+2}} \ .\eea
After some algebra, we obtain these conformal blocks in terms of Hypergeometric functions 
\bea  \label{eq:III-8} \mathcal{F}^{(4)}_1 (x) &=& \sqrt{N_1} \frac{\Gamma^2 ( \frac{1}{k+2} )}{\Gamma ( \frac{2}{k+2})}  x^{\frac{1-k}{2(k+2)}} (1-x)^{\frac{k+1}{2(k+2)}} \nonumber\\
&&\times \;_2F_1 \left( \frac{k+1}{k+2}, \frac{1}{k+2}; \frac{2}{k+2}; x \right) \nonumber\\
\mathcal{F}^{(4)}_2 (z) &=& \sqrt{N_2} \frac{\Gamma ( \frac{1}{k+2} )\Gamma ( \frac{2k+1}{k+2} )}{\Gamma (\frac{2k+2}{k+2})} [x (1-x)]^{\frac{k+1}{2(k+2)}} \nonumber\\
&&\times \;_2F_1 \left( \frac{k+1}{k+2}, \frac{2k+1}{k+2}; \frac{2(k+1)}{k+2}; x \right)  \ .\eea
To understand the physical content of these conformal blocks, we consider the $x\to 0$ limit. We observe that  $\mathcal{F}^{(4)}_1 (x) \sim x^{-\frac{k-1}{2(k+2)}} [1 + \mathcal{O} (x)]$, whereas $\mathcal{F}^{(4)}_2 (x) \sim x^{\frac{k+1}{2(k+2)}} [1 + \mathcal{O} (x)]$. Comparing with the operator product expansion (OPE)
$ \sigma (x) \sigma(0) \sim x^{-\frac{k-1}{2(k+2)}} \mathbb{I} + x^{ \frac{k+1}{2(k+2)}} \varepsilon (0) $
where $\varepsilon$ is defined in Table \ref{table:1}, it is evident that $\mathcal{F}^{(4)}_1$ and $\mathcal{F}^{(4)}_2$ have intermediate state $\mathbb{I}$  and $\varepsilon$, respectively. Schematically, they are given by the two diagrams shown in Figure \ref{fig:III-1}.

\begin{figure}[ht!]
    \centering
    \begin{tikzpicture}
  \begin{feynman}
    \vertex at (0,0) (a);
    \vertex at (-1,-1) (f1) {\(\sigma_1\)};
    \vertex at (-1,1) (f2) {\(\sigma_2\)};
    \vertex at (1,0) (b);
    \vertex at (0.5,0.3) (c) {\(\mathbb{I}\)};
    \vertex at (2,1) (f3) {\(\sigma_3\)};
    \vertex at (2,-1) (f4) {\(\sigma_4\)};
    \vertex at (-2,0)  (d1) {\(\mathcal{F}^{(4)}_1\)};
    \vertex at (-1.5,0)  (d2) {\(=\)};

    \diagram* {
      (f1) --  (a) --  (f2),
      (a)  -- (b),
      (f3) -- (b) -- (f4)
    };
  \end{feynman}
\end{tikzpicture}

\begin{tikzpicture}
\begin{feynman}
    \vertex at (0,0) (a);
    \vertex at (-1,-1) (f1) {\(\sigma_1\)};
    \vertex at (-1,1) (f2) {\(\sigma_2\)};
    \vertex at (1,0) (b);
    \vertex at (0.5,0.3) (c) {\(\varepsilon\)};
    \vertex at (2,1) (f3) {\(\sigma_3\)};
    \vertex at (2,-1) (f4) {\(\sigma_4\)};
    \vertex at (-2,0)  (d1) {\(\mathcal{F}^{(4)}_2\)};
    \vertex at (-1.5,0)  (d2) {\(=\)};

    \diagram* {
      (f1) --  (a) --  (f2),
      (a)  -- (b),
      (f3) -- (b) -- (f4)
    };
\end{feynman}
\end{tikzpicture}
    \caption{Conformal blocks of the four-point function.}
    \label{fig:III-1}
\end{figure}
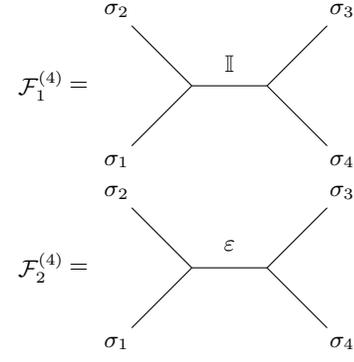

The normalization constants $N_\mu$ ($\mu =1,2$) are determined by comparing the expressions \eqref{eq:II-7} for $N=2$ and \eqref{eq:III-6} for the non-chiral amplitude $G^{(4)}$. Calculating the non-chiral amplitude can be avoided by using an argument based on monodromy transformations around $0$ and $1$. Under a monodromy transformation, we change bases. The conformal blocks in the new basis must provide a decomposition of the non-chiral amplitude of the same form \eqref{eq:III-6}. This leads to linear constraints that determine the normalization constants up to an overall multiplicative factor, which suffices for our application to quantum computation. 
After some algebra, we obtain
\be
N_1 = \mathcal{N} \sin  \frac{\pi }{k+2} \ , \ \ N_2 = \mathcal{N} \sin \frac{3 \pi }{k+2}\ .
\ee
where $\mathcal{N}$ can be determined using \eqref{eq:II-7}, but is not needed for our purposes.

For the four-point chiral amplitude, we derive two braiding matrices, $R_{12}$ and $R_{23}$, and a fusion matrix $R_{13}$, where $R_{ij}$ corresponds to the exchange of positions $\eta_i \leftrightarrow \eta_j$. These matrices are defined diagrammatically in Figure \ref{fig:III-2}.  
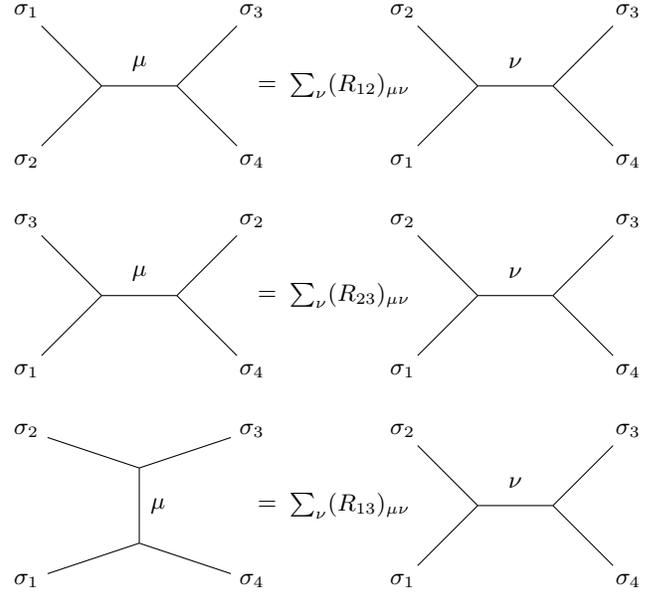
\begin{figure}
\centering
    \begin{tikzpicture}
    \begin{feynman}
    \vertex at (0,0) (a);
    \vertex at (-1,-1) (f1) {\(\sigma_2\)};
    \vertex at (-1,1) (f2) {\(\sigma_1\)};
    \vertex at (1,0) (b);
    \vertex at (0.5,0.3) (c) {\(\mu\)};
    \vertex at (2,1) (f3) {\(\sigma_3\)};
    \vertex at (2,-1) (f4) {\(\sigma_4\)};
    \vertex at (2.2,0) (d1) {\(=\)};
    \vertex at (3.3,0) (d2) {\(\sum_{\nu} (R_{12})_{\mu \nu}\)};
    \vertex at (5,0) (na);
    \vertex at (4,-1) (nf1) {\(\sigma_1\)};
    \vertex at (4,1) (nf2) {\(\sigma_2\)};
    \vertex at (6,0) (nb);
    \vertex at (5.5,0.3) (nc) {\(\nu\)};
    \vertex at (7,1) (nf3) {\(\sigma_3\)};
    \vertex at (7,-1) (nf4) {\(\sigma_4\)};

    \diagram* {
      (f1) --  (a) --  (f2),
      (a)  -- (b),
      (f3) -- (b) -- (f4),
      (nf1) --  (na) --  (nf2),
      (na)  -- (nb),
      (nf3) -- (nb) -- (nf4)
    };
    \end{feynman}
    \end{tikzpicture}
    $\;$\\
    \begin{tikzpicture}
    \begin{feynman}
    \vertex at (0,0) (a);
    \vertex at (-1,-1) (f1) {\(\sigma_1\)};
    \vertex at (-1,1) (f2) {\(\sigma_3\)};
    \vertex at (1,0) (b);
    \vertex at (0.5,0.3) (c) {\(\mu\)};
    \vertex at (2,1) (f3) {\(\sigma_2\)};
    \vertex at (2,-1) (f4) {\(\sigma_4\)};
    \vertex at (2.2,0) (d1) {\(=\)};
    \vertex at (3.3,0) (d2) {\(\sum_{\nu} (R_{23})_{\mu \nu}\)};
    \vertex at (5,0) (na);
    \vertex at (4,-1) (nf1) {\(\sigma_1\)};
    \vertex at (4,1) (nf2) {\(\sigma_2\)};
    \vertex at (6,0) (nb);
    \vertex at (5.5,0.3) (nc) {\(\nu\)};
    \vertex at (7,1) (nf3) {\(\sigma_3\)};
    \vertex at (7,-1) (nf4) {\(\sigma_4\)};

    \diagram* {
      (f1) --  (a) --  (f2),
      (a)  -- (b),
      (f3) -- (b) -- (f4),
      (nf1) --  (na) --  (nf2),
      (na)  -- (nb),
      (nf3) -- (nb) -- (nf4)
    };
    \end{feynman}
    \end{tikzpicture}
    $\;$\\
    \begin{tikzpicture}
    \begin{feynman}
    \vertex at (0.5,0.5) (a);
    \vertex at (-1,-1) (f1) {\(\sigma_1\)};
    \vertex at (-1,1) (f2) {\(\sigma_2\)};
    \vertex at (0.5,-0.5) (b);
    \vertex at (0.75,0) (c) {\(\mu\)};
    \vertex at (2,1) (f3) {\(\sigma_3\)};
    \vertex at (2,-1) (f4) {\(\sigma_4\)};
    \vertex at (2.2,0) (d1) {\(=\)};
    \vertex at (3.3,0) (d2) {\(\sum_{\nu} (R_{13})_{\mu \nu}\)};
    \vertex at (5,0) (na);
    \vertex at (4,-1) (nf1) {\(\sigma_1\)};
    \vertex at (4,1) (nf2) {\(\sigma_2\)};
    \vertex at (6,0) (nb);
    \vertex at (5.5,0.3) (nc) {\(\nu\)};
    \vertex at (7,1) (nf3) {\(\sigma_3\)};
    \vertex at (7,-1) (nf4) {\(\sigma_4\)};
    
    \diagram* {
      (f2) --  (a) --  (f3),
      (a)  -- (b),
      (f1) -- (b) -- (f4),
      (nf1) --  (na) --  (nf2),
      (na)  -- (nb),
      (nf3) -- (nb) -- (nf4)
    };
    \end{feynman}
    \end{tikzpicture}
\caption{\label{fig:III-2} Exchange matrices of the four-point correlator.  }
\end{figure}
The braiding matrix $R_{12}$ is diagonal because the two fields that we exchange fuse together. From the OPE, we deduce
\be \label{eq:III-10}   R_{12}^{(4)} =
 \begin{pmatrix} 
 e^{-i\pi \frac{k-1}{2(k+2)}} & 0 \\
 0 & e^{i\pi \frac{k+1}{2(k+2)}}
 \end{pmatrix}
 \;.\ee
The other two exchange matrices can be found using standard Hypergeometric and Gamma function identities. After some algebra, we obtain
\be \label{eq:III-11}   R_{13}^{(4)} =  \begin{pmatrix}
	\cos \theta_k & \sin \theta_k \\ \sin \theta_k & -\cos \theta_k
\end{pmatrix}
 \;,
\ee
where $\cos\theta_k = \frac{1}{2}\sec \frac{\pi}{k+2}$. The matrix $R_{23}$ can be deduced from
\be \label{eq:III-21} R_{23} = R_{13} R_{12} R^{-1}_{13} \ ,\ee 
We obtain
\be \label{eq:III-11a}   
 R_{23}^{(4)} =  \begin{pmatrix}
	e^{i \pi \frac{k-1}{2(k+2)}}\cos \theta_k & e^{-i \pi \frac{k+1}{2(k+2)}}\sin \theta_k \\ e^{-i \pi \frac{k+1}{2(k+2)}} \sin \theta_k & -e^{-i \pi \frac{3k+1}{2(k+2)}} \cos \theta_k
\end{pmatrix}
\;.
\ee
As an example, consider the $k=2$ case which corresponds to the critical Ising model. The diagonal braiding matrix becomes the  phase $S$ gate (up to a phase), while the fusion matrix reduces to the Hadamard gate, 
\be
\label{eq:III-13} 
R_{12}^{(4)} = e^{-i\frac{\pi}{8}} \begin{pmatrix}
	{1} & {0} \\ {0} & i
\end{pmatrix}
  \;\; , \;\;
R_{13}^{(4)} = \frac{1}{\sqrt{2}} \begin{pmatrix}
	{1} & {1} \\ {1} & -{1}
\end{pmatrix}
\ .
\ee
These matrices are not enough to achieve universal quantum computation \cite{bib:32} because we have no way to construct the phase $T$ gate using braidings.

Universal quantum computation can be achieved for $k=3$ which corresponds to the tri-critical Ising model. We obtain the matrices that appear in the Fibonacci anyon model 
\be
\label{eq:III-14} 
R_{12}^{(4)} =  \begin{pmatrix}
	e^{-i\frac{\pi}{5}} & 0 \\ 0 & e^{-i\frac{2\pi}{5}}
\end{pmatrix}
\ ,  \;
 R_{13}^{(4)} =  \begin{pmatrix}
	\frac{1}{\gamma} & \frac{1}{\sqrt{\gamma}} \\ \frac{1}{\sqrt{\gamma}} & -\frac{1}{\gamma}
\end{pmatrix}
\ ,
\ee
where $\gamma = \frac{\sqrt{5} +1}{2}$ is the golden ratio. The set \eqref{eq:III-14} is dense in $SU(2)$ \cite{bib:33}, leading to universal quantum computation. However, the minimal model $\mathcal{M} (5,4)$ cannot be used as a foundation for fault-tolerant quantum computation, because of the absence of a gapped state. In Section \ref{sec:V}, we will consider an alternative proposal using the coset $SU(2)^{\otimes 3}/ SU(2)_3$ that leads to universal quantum computation based on the braiding matrices \eqref{eq:III-14} as well as fault-tolerant quantum computation since $SU(2)^{\otimes 3}/ SU(2)_3$ possesses a gapped state.

\subsection{Five-point amplitudes}

\begin{figure}[t]
    \centering
    \begin{tikzpicture}
    \begin{feynman}
    \vertex at (0,0) (a);
    \vertex at (-1,-1) (f1) {\(\sigma_1\)};
    \vertex at (-1,1) (f2) {\(\sigma_2\)};
    \vertex at (1,0) (b);
    \vertex at (0.5,0.3) (d1) {\(\mathbb{I}\)};
    \vertex at (2,0) (c);
    \vertex at (1.5,0.3) (d2) {\(\varepsilon\)};
    \vertex at (3,1) (f3) {\(\sigma_3\)};
    \vertex at (3,-1) (f4) {\(\sigma_4\)};
    \vertex at (1,-1) (f5) {\(\varepsilon_5\)};
    \vertex at (-2,0)  (d3) {\(\mathcal{F}^{(5)}_1\)};
    \vertex at (-1.5,0)  (d4) {\(=\)};

    \diagram* {
      (f1) --  (a) --  (f2),
      (a) -- (b) -- (f5) -- (b) -- (c),
      (f3) -- (c) -- (f4)
    };
    \end{feynman}
    \end{tikzpicture}

    \begin{tikzpicture}
    \begin{feynman}
    \vertex at (0,0) (a);
    \vertex at (-1,-1) (f1) {\(\sigma_1\)};
    \vertex at (-1,1) (f2) {\(\sigma_2\)};
    \vertex at (1,0) (b);
    \vertex at (0.5,0.3) (d1) {\(\varepsilon\)};
    \vertex at (2,0) (c);
    \vertex at (1.5,0.3) (d2) {\(\mathbb{I}\)};
    \vertex at (3,1) (f3) {\(\sigma_3\)};
    \vertex at (3,-1) (f4) {\(\sigma_4\)};
    \vertex at (1,-1) (f5) {\(\varepsilon_5\)};
    \vertex at (-2,0)  (d3) {\(\mathcal{F}^{(5)}_2\)};
    \vertex at (-1.5,0)  (d4) {\(=\)};

    \diagram* {
      (f1) --  (a) --  (f2),
      (a) -- (b) -- (f5) -- (b) -- (c),
      (f3) -- (c) -- (f4)
    };
    \end{feynman}
    \end{tikzpicture}

    \begin{tikzpicture}
    \begin{feynman}
    \vertex at (0,0) (a);
    \vertex at (-1,-1) (f1) {\(\sigma_1\)};
    \vertex at (-1,1) (f2) {\(\sigma_2\)};
    \vertex at (1,0) (b);
    \vertex at (0.5,0.3) (d1) {\(\varepsilon\)};
    \vertex at (2,0) (c);
    \vertex at (1.5,0.3) (d2) {\(\varepsilon\)};
    \vertex at (3,1) (f3) {\(\sigma_3\)};
    \vertex at (3,-1) (f4) {\(\sigma_4\)};
    \vertex at (1,-1) (f5) {\(\varepsilon_5\)};
    \vertex at (-2,0)  (d3) {\(\mathcal{F}^{(5)}_3\)};
    \vertex at (-1.5,0)  (d4) {\(=\)};

    \diagram* {
      (f1) --  (a) --  (f2),
      (a) -- (b) -- (f5) -- (b) -- (c),
      (f3) -- (c) -- (f4)
    };
    \end{feynman}
    \end{tikzpicture}

    \caption{Conformal blocks of the five-point function.  In the case of the critical Ising model ($k=2$), the third conformal block vanishes since $\varepsilon \otimes \varepsilon = \mathbb{I}$.}
    \label{fig:III-3}
\end{figure}
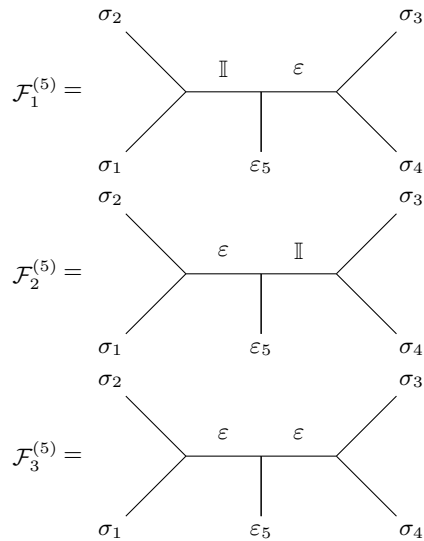

Next, we consider the five-point chiral amplitude of four $\sigma$ fields and one $\varepsilon$ field. This is not an amplitude of the type \eqref{eq:II-10} that we use for quantum computation. However, it is needed for the six-point chiral amplitude of $\sigma$ fields. 

The correlator needs a single negative screening charge in order to obey the charge neutrality condition,
\be \label{eq:III-16} \mathcal{I}^{(5)}(\bm{\eta};w)= \left\langle \sigma_1 \sigma_2 \sigma_3 \sigma_4 \tilde{\varepsilon}_5 V_- (w)  \right\rangle \ . \ee
From the fusion rules \eqref{eq:II-4} we deduce that  there are two (three) conformal blocks for $k=2$ ($ k \geq 3$),
defined diagrammatically in Figure \ref{fig:III-3}, and in terms of contour integrals  by
\be \label{eq:III-18} \mathcal{F}_\mu^{(5)} (\bm{\eta}) =  \sqrt{N_\mu} \oint_{\mu} dw \mathcal{I}^{(5)}(\bm{\eta};w) \ .\ee 
The normalization constants $N_\mu$ are evaluated using a monodromy argument as before,
\bea
N_1 &=& \mathcal{N} \sin ^2\frac{2 \pi }{k+2} \ ,\nonumber \\
N_2 &=& \mathcal{N} \sin ^2 \frac{3 \pi }{k+2} \ ,\nonumber \\
N_3 &=&  8 \mathcal{N}\cos ^2 \frac{\pi }{k+2} \cos \frac{2 \pi }{k+2} \sin ^2 \frac{3 \pi }{k+2} \ ,
\eea
up to an overall multiplicative constant $\mathcal{N}$ which is not needed for our purposes.

\begin{figure}[ht!]
    \centering
    \begin{tikzpicture}
    \begin{feynman}
    \vertex at (0,0) (a);
    \vertex at (-0.7,-0.7) (f1) {\(\sigma_2\)};
    \vertex at (-0.7,0.7) (f2) {\(\sigma_1\)};
    \vertex at (1,0) (b);
    \vertex at (0.5,0) (c);
    \tiny
    \vertex at (0.25,0.25) (c1) {\( {\mu_1} \)};
    \vertex at (0.75,0.25) (c2) {\( \mu_2\)};
    \normalsize
    \vertex at (1.7,0.7) (f3) {\(\sigma_3\)};
    \vertex at (1.7,-0.7) (f4) {\(\sigma_4\)};
    \vertex at (0.5,-0.7) (f5) {\(\varepsilon_5\)};
    \vertex at (2.1,0) (d1) {\(=\)};
    \vertex at (3.2,0) (d2) {\( \sum_{\nu} (R_{12})_{\mu \nu} \)};
    \vertex at (5,0) (na);
    \vertex at (4.3,-0.7) (nf1) {\(\sigma_1\)};
    \vertex at (4.3,0.7) (nf2) {\(\sigma_2\)};
    \vertex at (6,0) (nb);
    \vertex at (5.5,0) (nc);
    \tiny
    \vertex at (5.25,0.25) (c3) {\( {\nu_1} \)};
    \vertex at (5.75,0.25) (c4) {\( \nu_2\)};
    \normalsize
    \vertex at (6.7,0.7) (nf3) {\(\sigma_3\)};
    \vertex at (6.7,-0.7) (nf4) {\(\sigma_4\)};
    \vertex at (5.5,-0.7) (nf5) {\(\varepsilon_5\)};

    \diagram* {
      (f1) --  (a) --  (f2),
      (a)  -- (c) -- (f5),
      (c) -- (b),
      (f3) -- (b) -- (f4),
      (nf1) --  (na) --  (nf2),
      (na)  -- (nc) -- (nf5),
      (nc) -- (nb),
      (nf3) -- (nb) -- (nf4)
    };
    \end{feynman}
    \end{tikzpicture}
    
\begin{tikzpicture}
    \begin{feynman}
    \vertex at (0,0) (a);
    \vertex at (-0.7,-0.7) (f1) {\(\sigma_1\)};
    \vertex at (-0.7,0.7) (f2) {\(\sigma_3\)};
    \vertex at (1,0) (b);
    \vertex at (0.5,0) (c);
    \tiny
    \vertex at (0.25,0.25) (c1) {\( {\mu_1} \)};
    \vertex at (0.75,0.25) (c2) {\( \mu_2\)};
    \normalsize
    \vertex at (1.7,0.7) (f3) {\(\sigma_2\)};
    \vertex at (1.7,-0.7) (f4) {\(\sigma_4\)};
    \vertex at (0.5,-0.7) (f5) {\(\varepsilon_5\)};
    \vertex at (2.1,0) (d1) {\(=\)};
    \vertex at (3.2,0) (d2) {\( \sum_{\nu} (R_{23})_{\mu \nu} \)};
    \vertex at (5,0) (na);
    \vertex at (4.3,-0.7) (nf1) {\(\sigma_1\)};
    \vertex at (4.3,0.7) (nf2) {\(\sigma_2\)};
    \vertex at (6,0) (nb);
    \vertex at (5.5,0) (nc);
    \tiny
    \vertex at (5.25,0.25) (c3) {\( {\nu_1} \)};
    \vertex at (5.75,0.25) (c4) {\( \nu_2\)};
    \normalsize
    \vertex at (6.7,0.7) (nf3) {\(\sigma_3\)};
    \vertex at (6.7,-0.7) (nf4) {\(\sigma_4\)};
    \vertex at (5.5,-0.7) (nf5) {\(\varepsilon_5\)};

    \diagram* {
      (f1) --  (a) --  (f2),
      (a)  -- (c) -- (f5),
      (c) -- (b),
      (f3) -- (b) -- (f4),
      (nf1) --  (na) --  (nf2),
      (na)  -- (nc) -- (nf5),
      (nc) -- (nb),
      (nf3) -- (nb) -- (nf4)
    };
    \end{feynman}
    \end{tikzpicture}    
    
    \begin{tikzpicture}
    \begin{feynman}
    \vertex at (0,0) (a);
    \vertex at (-0.7,-0.7) (f1) {\(\sigma_3\)};
    \vertex at (-0.7,0.7) (f2) {\(\sigma_2\)};
    \vertex at (1,0) (b);
    \vertex at (0.5,0) (c);
    \tiny
    \vertex at (0.25,0.25) (c1) {\( {\mu_1} \)};
    \vertex at (0.75,0.25) (c2) {\( \mu_2\)};
    \normalsize
    \vertex at (1.7,0.7) (f3) {\(\sigma_1\)};
    \vertex at (1.7,-0.7) (f4) {\(\sigma_4\)};
    \vertex at (0.5,-0.7) (f5) {\(\varepsilon_5\)};
    \vertex at (2.1,0) (d1) {\(=\)};
    \vertex at (3.2,0) (d2) {\( \sum_{\nu} (R_{13})_{\mu \nu} \)};
    \vertex at (5,0) (na);
    \vertex at (4.3,-0.7) (nf1) {\(\sigma_1\)};
    \vertex at (4.3,0.7) (nf2) {\(\sigma_2\)};
    \vertex at (6,0) (nb);
    \vertex at (5.5,0) (nc);
    \tiny
    \vertex at (5.25,0.25) (c3) {\( {\nu_1} \)};
    \vertex at (5.75,0.25) (c4) {\( \nu_2\)};
    \normalsize
    \vertex at (6.7,0.7) (nf3) {\(\sigma_3\)};
    \vertex at (6.7,-0.7) (nf4) {\(\sigma_4\)};
    \vertex at (5.5,-0.7) (nf5) {\(\varepsilon_5\)};

    \diagram* {
      (f1) --  (a) --  (f2),
      (a)  -- (c) -- (f5),
      (c) -- (b),
      (f3) -- (b) -- (f4),
      (nf1) --  (na) --  (nf2),
      (na)  -- (nc) -- (nf5),
      (nc) -- (nb),
      (nf3) -- (nb) -- (nf4)
    };
    \end{feynman}
    \end{tikzpicture}
    
    \caption{Exchange matrices of the five-point correlator. In our notation each channel $\mu$ is represented by a pair $(\mu_1, \mu_2)$. For the first channel we have $(\mathbb{I},\varepsilon)$, for the second  $(\varepsilon, \mathbb{I})$, and the third   $(\varepsilon,\varepsilon)$.}
    \label{fig:III-4}
\end{figure}
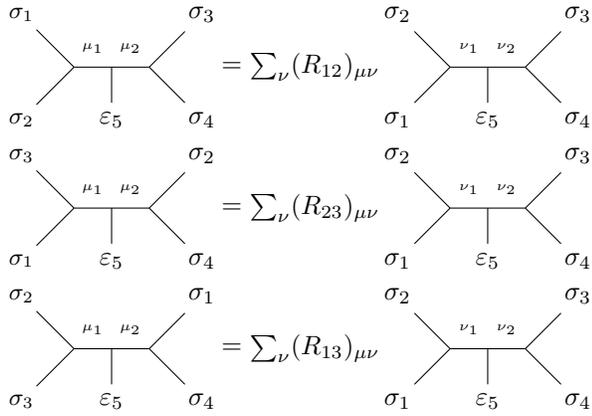

The braiding and fusion matrices for the five-point amplitude are depicted in Figure \ref{fig:III-4}.
The braiding matrix $R_{12}$ is easily obtained from the OPE  $ \sigma (\eta) \sigma(0) \sim \eta^{-\frac{k-1}{2(k+2)}} \mathbb{I} + \eta^{ \frac{k+1}{2(k+2)}} \varepsilon (0) $,
\be \label{eq:III-19}  R_{12}^{(5)} = e^{-i\pi \frac{k-1}{2(k+2)}}
 \begin{pmatrix} 
 1 & 0 & 0 \\
 0 & e^{i\pi \frac{k}{k+2}} & 0 \\
 0 & 0& e^{i\pi \frac{k}{k+2}} 
 \end{pmatrix}
 \;.\ee
The fusion matrix $R_{13}$ is found by converting five-point functions into four-point functions (see Appendix \ref{app:A} for details).
We obtain
\be
\label{eq:III-20}
R_{13}^{(5)} = \begin{pmatrix} 
 c_k &  c_k& -\sqrt{d_k} \\
 c_k & \frac{d_k \omega_k -c_k^3}{s_k^2} & \frac{ (\omega_k+c_k) \sqrt{d_k}}{s_k t_k}  \\
-\sqrt{d_k}  & \frac{ (\omega_k+c_k) \sqrt{d_k}}{s_k t_k} &\frac{\omega_k c_k -d_k}{s_k t_k} 
\end{pmatrix} \ ,
\ee
where $c_k = \cos\theta_k$, $d_k= -\cos 2\theta_k$, $s_k = \sin\theta_k$, $t_k = \tan\theta_k$, and $\omega_k = e^{i\pi \frac{3(k+1)}{2(k+2)}}$.
The braiding matrix $R_{23}$ is deduced from \eqref{eq:III-21}.

As an example, for the tri-critical Ising model ($k=3$), we obtain
\be \label{eq:III-19f}  R_{12}^{(5)} = 
 \begin{pmatrix} 
 e^{-i \frac{2\pi}{5}} & 0 & 0 \\
 0 & e^{i \frac{\pi}{5}} & 0 \\
 0 & 0& e^{i \frac{\pi}{5}} 
 \end{pmatrix}
 \;,\ee
\be \label{eq:III-22} R_{13}^{(5)} = 
 \begin{pmatrix} 
 \gamma^{-1} & \gamma^{-1}   & -\gamma^{-3/2}   \\
 \gamma^{-1} & -\frac{ \left({1+e^{i\frac{\pi}{5}}}\right)}{\gamma^2}& -\gamma^{-5/2} e^{i\frac{2\pi}{5}}\\ 
 -\gamma^{-3/2} & -\gamma^{-5/2} e^{i\frac{2\pi}{5}}&  - \frac{\left({\gamma^{-1} +\gamma e^{i\frac{\pi}{5}}}\right) }{\gamma^2}
 \end{pmatrix}\ ,\ee
 and
 \be \label{eq:III-23} R_{23}^{(5)} = 
 \begin{pmatrix} 
 e^{i\frac{\pi}{5}} \gamma^{-1} & e^{-i\frac{2\pi}{5}}\gamma^{-1}   & -e^{-i\frac{2\pi}{5}} \gamma^{-3/2}   \\
e^{-i\frac{2\pi}{5}}\gamma^{-1}  &  e^{i\frac{\pi}{5}} \gamma^{-1} & -e^{-i\frac{2\pi}{5}} \gamma^{-3/2} \\ 
 -e^{-i\frac{2\pi}{5}} \gamma^{-3/2}  & -e^{-i\frac{2\pi}{5}} \gamma^{-3/2} &  e^{i\frac{\pi}{5}}-\gamma^{-2}
 \end{pmatrix}\ .\ee
 These expressions are needed for six-point amplitudes to be discussed next.
 
\subsection{Six-point amplitudes}
Next, we consider the amplitude involving six $\sigma$ fields.
From the fusion rules \eqref{eq:II-4}, we know that there are 4 conformal blocks for $k=2$ and 5 conformal blocks for $k \geq 3$, as shown in Figure \ref{fig:III-5}.

\begin{figure}[ht!]
\centering
\begin{tikzpicture}
\begin{feynman}
    \vertex at (0,0) (a);
    \vertex at (-0.7,-0.7) (f1) {\(\sigma_1\)};
    \vertex at (-0.7,0.7) (f2) {\(\sigma_2\)};
    \vertex at (1,0) (b);
    \vertex at (1.7,0.7) (c);
    \vertex at (1.7,-0.7) (d);
    \vertex at (1.6,1.7) (f3) {\(\sigma_3\)};
    \vertex at (2.7,0.9) (f4) {\(\sigma_4\)};
    \vertex at (2.7,-0.9) (f5) {\(\sigma_5\)};
    \vertex at (1.6,-1.7) (f6) {\(\sigma_6\)};
    \vertex at (0.5,0.3) (k1) {\(\mathbb{I}\)};
    \vertex at (1.6,0.2) (k2) {\(\mathbb{I}\)};
    \vertex at (1.2,-0.6) (k3) {\(\mathbb{I}\)};
    \vertex at (-1.8,0)  (d1) {\(\mathcal{F}^{(6)}_1\)};
    \vertex at (-1.2,0)  (d2) {\(=\)};
    \diagram* {
      (f1) --  (a) --  (f2),
      (a)  -- (b),
      (f3) --  (c) --  (f4),
      (c) -- (b),
      (f5) -- (d) -- (f6),
      (d) -- (b)
    };
\end{feynman}
\end{tikzpicture}

\begin{tikzpicture}
\begin{feynman}
    \vertex at (0,0) (a);
    \vertex at (-0.7,-0.7) (f1) {\(\sigma_1\)};
    \vertex at (-0.7,0.7) (f2) {\(\sigma_2\)};
    \vertex at (1,0) (b);
    \vertex at (1.7,0.7) (c);
    \vertex at (1.7,-0.7) (d);
    \vertex at (1.6,1.7) (f3) {\(\sigma_3\)};
    \vertex at (2.7,0.9) (f4) {\(\sigma_4\)};
    \vertex at (2.7,-0.9) (f5) {\(\sigma_5\)};
    \vertex at (1.6,-1.7) (f6) {\(\sigma_6\)};
    \vertex at (0.5,0.3) (k1) {\(\varepsilon\)};
    \vertex at (1.6,0.2) (k2) {\(\varepsilon\)};
    \vertex at (1.2,-0.6) (k3) {\(\mathbb{I}\)};
    \vertex at (-1.8,0)  (d1) {\(\mathcal{F}^{(6)}_2\)};
    \vertex at (-1.2,0)  (d2) {\(=\)};
    \diagram* {
      (f1) --  (a) --  (f2),
      (a)  -- (b),
      (f3) --  (c) --  (f4),
      (c) -- (b),
      (f5) -- (d) -- (f6),
      (d) -- (b)
    };
\end{feynman}
\end{tikzpicture}

\begin{tikzpicture}
\begin{feynman}
    \vertex at (0,0) (a);
    \vertex at (-0.7,-0.7) (f1) {\(\sigma_1\)};
    \vertex at (-0.7,0.7) (f2) {\(\sigma_2\)};
    \vertex at (1,0) (b);
    \vertex at (1.7,0.7) (c);
    \vertex at (1.7,-0.7) (d);
    \vertex at (1.6,1.7) (f3) {\(\sigma_3\)};
    \vertex at (2.7,0.9) (f4) {\(\sigma_4\)};
    \vertex at (2.7,-0.9) (f5) {\(\sigma_5\)};
    \vertex at (1.6,-1.7) (f6) {\(\sigma_6\)};
    \vertex at (0.5,0.3) (k1) {\(\mathbb{I}\)};
    \vertex at (1.6,0.2) (k2) {\(\varepsilon\)};
    \vertex at (1.2,-0.6) (k3) {\(\varepsilon\)};
    \vertex at (-1.8,0)  (d1) {\(\mathcal{F}^{(6)}_3\)};
    \vertex at (-1.2,0)  (d2) {\(=\)};
    \diagram* {
      (f1) --  (a) --  (f2),
      (a)  -- (b),
      (f3) --  (c) --  (f4),
      (c) -- (b),
      (f5) -- (d) -- (f6),
      (d) -- (b)
    };
\end{feynman}
\end{tikzpicture}

\begin{tikzpicture}
\begin{feynman}
    \vertex at (0,0) (a);
    \vertex at (-0.7,-0.7) (f1) {\(\sigma_1\)};
    \vertex at (-0.7,0.7) (f2) {\(\sigma_2\)};
    \vertex at (1,0) (b);
    \vertex at (1.7,0.7) (c);
    \vertex at (1.7,-0.7) (d);
    \vertex at (1.6,1.7) (f3) {\(\sigma_3\)};
    \vertex at (2.7,0.9) (f4) {\(\sigma_4\)};
    \vertex at (2.7,-0.9) (f5) {\(\sigma_5\)};
    \vertex at (1.6,-1.7) (f6) {\(\sigma_6\)};
    \vertex at (0.5,0.3) (k1) {\(\varepsilon\)};
    \vertex at (1.6,0.2) (k2) {\(\mathbb{I}\)};
    \vertex at (1.2,-0.6) (k3) {\(\varepsilon\)};
    \vertex at (-1.8,0)  (d1) {\(\mathcal{F}^{(6)}_4\)};
    \vertex at (-1.2,0)  (d2) {\(=\)};
    \diagram* {
      (f1) --  (a) --  (f2),
      (a)  -- (b),
      (f3) --  (c) --  (f4),
      (c) -- (b),
      (f5) -- (d) -- (f6),
      (d) -- (b)
    };
\end{feynman}
\end{tikzpicture}

\begin{tikzpicture}
\begin{feynman}
    \vertex at (0,0) (a);
    \vertex at (-0.7,-0.7) (f1) {\(\sigma_1\)};
    \vertex at (-0.7,0.7) (f2) {\(\sigma_2\)};
    \vertex at (1,0) (b);
    \vertex at (1.7,0.7) (c);
    \vertex at (1.7,-0.7) (d);
    \vertex at (1.6,1.7) (f3) {\(\sigma_3\)};
    \vertex at (2.7,0.9) (f4) {\(\sigma_4\)};
    \vertex at (2.7,-0.9) (f5) {\(\sigma_5\)};
    \vertex at (1.6,-1.7) (f6) {\(\sigma_6\)};
    \vertex at (0.5,0.3) (k1) {\(\varepsilon\)};
    \vertex at (1.6,0.2) (k2) {\(\varepsilon\)};
    \vertex at (1.2,-0.6) (k3) {\(\varepsilon\)};
    \vertex at (-1.8,0)  (d1) {\(\mathcal{F}^{(6)}_5\)};
    \vertex at (-1.2,0)  (d2) {\(=\)};
    \diagram* {
      (f1) --  (a) --  (f2),
      (a)  -- (b),
      (f3) --  (c) --  (f4),
      (c) -- (b),
      (f5) -- (d) -- (f6),
      (d) -- (b)
    };
\end{feynman}
\end{tikzpicture}
\caption{Conformal blocks of the six-point function. For $k=2$ the last conformal block vanishes.}
\label{fig:III-5}
\end{figure}
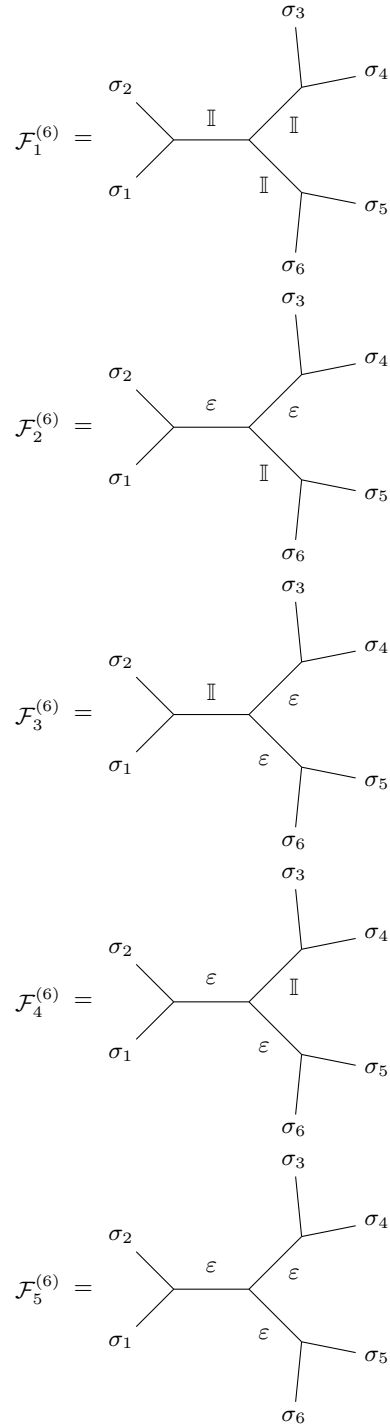
Using the OPE $ \sigma (\eta_5) \sigma(\eta_6) \sim \eta_{56}^{-\frac{k-1}{2(k+2)}} \mathbb{I} + \eta_{56}^{ \frac{k+1}{2(k+2)}} \varepsilon (\eta_5) $ to expand near $\eta_6=\eta_5$, we notice two different subspaces, one for the $\frac{1-k}{2(k+2)}$ and one for the $\frac{k+1}{2(k+2)}$ powers of $\eta_{56}$. The first one contains $\mathcal{F}^{(6)}_1$ and $\mathcal{F}^{(6)}_2$ is similar to the four-point amplitude, whereas the second one contains $\mathcal{F}^{(6)}_3$, $\mathcal{F}^{(6)}_4$ and $\mathcal{F}^{(6)}_5$ and is similar to the five-point amplitude.

The exchange matrices corresponding to the exchanges $ \eta_1 \leftrightarrow  \eta_2$, $ \eta_1 \leftrightarrow  \eta_3$ and $ \eta_2 \leftrightarrow  \eta_3$ can be found using the four-point and five-point matrices, 
\be R^{(6)} = \left(\begin{array}{@{}c|c@{}}
  R^{(4)}
  & \bm{0} \\
\hline
  \bm{0} &
  R^{(5)}
\end{array}\right) \ , \ \ R \in \{ R_{12}, R_{13}, R_{23} \}
\ee
For the critical Ising model $(k=2)$ the $4\times4$ exchange matrices have been studied in \cite{bib:34,bib:35,bib:36}. Confirming these results, we observe that the last conformal block decouples and the remaining $2\times2$ blocks correspond to a system of two qubits,
\be R^{(6)} = \left(\begin{array}{@{}c|c|c@{}}
  R^{(4)}
  & \bm{0} & 0 \\
\hline
  \bm{0} &
  R^{(4)} & 0 \\
  \hline
  \bm{0} & \bm{0} & r
\end{array}\right) \ , \ \ R \in \{ R_{12}, R_{13}, R_{23} \}
\ee
where $r$ is an irrelevant phase. These matrices are gates acting on two qubits. However, they do not lead to universal quantum computation.

For $k\ge 3$, the exchange matrices form a sufficient set of gates for universal quantum computation in five dimensions.
 
 Although we focused the discussion on exchange matrices $R_{ij}$, $i,j=1,2,3$, the above method can be straightforwardly extended to include the point $\eta_4$. To obtain exchange matrices involving the points $\eta_5$ or $\eta_6$, we need to consider different limits that reduce the six-point amplitude to different four- and five-point amplitudes. For example, to calculate the exchange matrix $R_{15}$ we can expand near $\eta_4=\eta_3$ using the OPE $ \sigma (\eta_4) \sigma(\eta_3) \sim \eta_{34}^{-\frac{k-1}{2(k+2)}} \mathbb{I} + \eta_{34}^{ \frac{k+1}{2(k+2)}} \varepsilon (\eta_3) $. We obtain two distinct subspaces, one corresponding to the four-point results obtained earlier, but with conformal blocks $\mathcal{F}_1^{(6)}$ and $\mathcal{F}_4^{(6)}$, and the other corresponding to a five-point amplitude  with conformal blocks $\mathcal{F}_2^{(6)}$, $\mathcal{F}_3^{(6)}$, and $\mathcal{F}_5^{(6)}$. All other exchange matrices are constructed similarly. 
 
 \subsection{Higher-point amplitudes}
 
 The dimensionality of Hilbert space (number of conformal blocks) depends on both $N$ and $k$. For the four-point amplitude $(N=2)$ we have two conformal blocks for all $k$, due to the fusion rule $\sigma\times \sigma \sim \varepsilon$ (Eq.\ \eqref{eq:II-4} with $s=s'=2$). For the six-point amplitude $(N=3)$, we have four conformal blocks for $k=2$ and five conformal blocks for all other cases $(k\geq 3)$. Using the fusion rules \eqref{eq:II-4}, we can find the number of conformal blocks for higher-point amplitudes. In particular, for $N=4$ we have 8 conformal blocks for $k=2$, 13 for $k=3$,  and 14 for all other cases $(k\geq 4)$. For $k=2$ (critical Ising model), the dimensionality of Hilbert space is $2^{N-1}$, whereas for $k=3$ (tricritical Ising model), it follows the Fibonacci sequence. General expressions for other $k\ge 4$ can also be found using the fusion rules.
 
 Although higher-point amplitudes cannot be explicitly calculated, we can still obtain the exchange matrices by
 following the procedure discussed above for the six-point amplitude. For example, to  find the matrices  $R_{12}$, $R_{13}$ and $R_{23}$ for the eight-point amplitude, we will work in the limit $\eta_8\to \eta_7$ and $\eta_6\to \eta_5$. We obtain the exchange matrix for the eight-point function as a block diagonal matrix, with each block corresponding to a four- or five- function. This procedure can be generalized to arbitrary $N$.

\section{Coset Amplitudes} \label{sec:coamp}

Correlators in the coset CFT $SU(2)^{\otimes k}/SU(2)_k$ can be factorized \cite{bib:37,bib:37a} in products of correlators of $SU(2)_1$ and $\overline{SU(2)_k}$ WZW models. The primary fields in these WZW models are $\chi_m^{[i]j}$ ($i=1,\dots, k$) in $SU(2)_1$ and $\bar{\tau}_m^j$ in $\overline{SU(2)_k}$. We are interested in the case $j=\frac{1}{2}$. The conformal weights of these primary fields are
\be\label{eq:35} h_\chi = \frac{1}{4}\ , \ \ h_{\bar{\tau}} = -\frac{3}{4(k+2)}\ee
To simplify the notation, we will drop the index $j$ and use $m=\pm$ to denote $m= \pm\frac{1}{2}$. Then correlators are of the general form
\bea \label{eq:V-6}
X^{[i]}_{m_1 \dots m_N}&=&
\braket{\chi^{[i]}_{m_1}(\eta_1) \cdots \chi^{[i]}_{m_N}(\eta_N)} \ , \ \ i=1,\dots, k \nonumber \\
Y_{\mu ;m_1 \dots m_N}&=& \braket{\bar{\tau}_{m_1}(\eta_1)\cdots \bar{\tau}_{m_N}(\eta_N)} \ .
\eea
where $\mu$ labels the corresponding conformal block.

The primary field $\sigma = \Phi_{(1,2)}$ can be constructed using
\be \label{eq:V-3a}
\sigma^{[i]} =  \chi^{[i]}_+ \bar{\tau}_+ + \chi^{[i]}_- \bar{\tau}_- \ .
\ee
Evidently, this does not lead to a unique definition since we can consider any of the $SU(2)_1$ factors in the coset to construct $\sigma$. We will identify $\sigma \equiv \sigma^{[k]}$. Using \eqref{eq:35}, we obtain its conformal weight $h_\sigma = h_\chi + h_{\bar\tau} = \frac{k-1}{4(k+2)}$, in agreement with the minimal model result in Table \ref{table:1}.


Agreement with the minimal model $\mathcal{M} (k+2,k+1)$ is expected, because the latter can be constructed from the coset $SU(2)_{k-1}\times SU(2)_1/SU(2)_k$. The field $\sigma$ in the minimal model is also given by \eqref{eq:V-3a} with $\chi_\pm$ in the (single) $SU(2)_1$ factor in the coset $SU(2)_{k-1}\times SU(2)_1/SU(2)_k$. Therefore, correlators of the $\sigma$ field agree in the two CFTs (minimal model and coset $SU(2)^{\otimes k}/SU(2)_k$).

The two chiral conformal blocks for the four-point amplitude $\braket{\sigma_1 \sigma_2\sigma_3  \sigma_4 }$ are found from
\bea \label{eq:V-5x}
\mathcal{F}^{(4)}_\mu(\bm{\eta})&=& \sqrt{N_{\mu}} \sum_{m_1 \dots m_4} X_{m_1 \dots m_4} Y_{\mu;m_1 \dots m_4} \ ,
\eea
where we dropped the index $[k]$ that does not affect the calculation.
To evaluate these correlators, we will use the free field representation of $SU(2)_k$ \cite{bib:38,bib:39,bib:40}. The primary fields are defined in terms of a massless free boson $\varphi$  and a $( \beta, \gamma)$ bosonic ghost system. Correlators are evaluated using the Coulomb gas formalism. Charge neutrality is enforced  using screening charges and conjugate fields, as needed. It leads to the constraint that the total magnetic number vanishes ($\sum m_i = 0$). 
The sum in \eqref{eq:V-5x} reduces to
\bea \label{eq:V-7}
\mathcal{F}^{(4)}_\mu(\bm{\eta})=&& 2 \sqrt{N_{\mu}} [  X_{+--+} Y_{\mu;+--+} 
+ X_{--++} Y_{\mu;--++} \nonumber\\ && + X_{-+-+} Y_{\mu;-+-+} ] \ .
\eea
As before, a global conformal transformation fixes three points, $\eta_1 \rightarrow 0$, $\eta_2 \rightarrow x$, $\eta_3 \rightarrow 1$ and $\eta_4 \rightarrow \infty$.  See Appendix \ref{app:B} for a detailed calculation  of $SU(2)_k$ correlators. For $SU(2)_1$ we obtain the functions
\be \label{eq:V-8}
X_{+--+} =  \mathcal{C} \sqrt{\frac{1-x}{x}} \ , \
X_{--++} = \mathcal{C} \sqrt{\frac{x}{1-x}} \ , \ee 
where $\mathcal{C} = \frac{4 \pi^2  }{\Gamma \left(\frac{1}{3}\right)^3}$, and
\be\label{eq:39} X_{-+-+} = - X_{+--+} - X_{--++} \ ,
\ee
by setting $q=1$ in the corresponding expressions \eqref{eq:B5} - \eqref{eq:B7} \cite{bib:25}.

Similarly, we obtain the functions for the two $\overline{SU(2)_k}$ conformal blocks by setting $q=-k-4$ in the corresponding expressions \eqref{eq:B5} - \eqref{eq:B10} \cite{bib:25},
\begin{widetext}
\bea \label{eq:V-10}
Y_{1;+--+} &=& -\frac{\Gamma \left(\frac{1}{k+2}\right) \Gamma \left(\frac{k+3}{k+2}\right)}{\Gamma \left(\frac{k+4}{k+2}\right)}  (1-x)^{-\frac{1}{2 k+4}} x^{\frac{3}{2 k+4}}   \;_2F_1\left(-\frac{1}{k+2},\frac{1}{k+2};\frac{k+4}{k+2};x\right) \nonumber \ , \\
Y_{1;--++} &=& \frac{\Gamma \left(\frac{1}{k+2}\right) \Gamma \left(\frac{k+3}{k+2}\right) }{(k+4) \Gamma \left(\frac{k+4}{k+2}\right)} (1-x)^{-\frac{1}{2 k+4}} x^{\frac{2 k+7}{2 k+4}}   \;_2F_1\left(\frac{k+1}{k+2},\frac{k+3}{k+2};\frac{2 (k+3)}{k+2};x\right) \nonumber \ ,\\
Y_{2;+--+} &=& \frac{ \Gamma \left(\frac{1}{k+2}\right)  \Gamma \left(-\frac{3}{k+2}\right) }{2 \Gamma \left(-\frac{2}{k+2}\right)}  ((1-x) x)^{-\frac{1}{2 k+4}}  \;_2F_1\left(-\frac{3}{k+2},-\frac{1}{k+2};\frac{k}{k+2};x\right)  \ , \nonumber \\
Y_{2;--++}  &=&-\frac{ \Gamma \left(-\frac{3}{k+2}\right) \Gamma \left(\frac{1}{k+2}\right) }{\Gamma \left(-\frac{2}{k+2}\right)} ((1-x) x)^{-\frac{1}{2 k+4}}  \;_2F_1\left(-\frac{3}{k+2},-\frac{1}{k+2};-\frac{2}{k+2};x\right) \ , \nonumber\\
Y_{\mu;-+-+} &=& - Y_{\mu;+--+}  - Y_{\mu;--++} \ , \ \ \mu=1,2 \ .
\eea
\end{widetext}
After some algebra involving Hypergeometric function identities, we arrive at compact explicit expressions for the conformal blocks \eqref{eq:V-5x},
\bea \label{eq:V-12}
\mathcal{F}^{(4)}_1
&=& - \sqrt{N_1} \frac{32 \pi ^2  \Gamma^2 \left(\frac{1}{k+2}\right) }{\Gamma^3 \left(\frac{1}{3}\right) \Gamma \left(\frac{2}{k+2}\right)}  x^{\frac{1-k}{2 k+4}} (1-x)^{\frac{k+1}{2 k+4}}  \nonumber \\
&& \times \, _2F_1\left(\frac{1}{k+2},\frac{k+1}{k+2};\frac{2}{k+2};x\right) \ , \nonumber \\
\mathcal{F}^{(4)}_2
&=& \sqrt{N_2} \frac{12 \pi ^2 (1-k) \Gamma \left(-\frac{3}{k+2}\right) \Gamma \left(\frac{1}{k+2}\right) }{k \Gamma^3 \left(\frac{1}{3}\right) \Gamma \left(-\frac{2}{k+2}\right)} x^{\frac{k+1}{2 k+4}} \nonumber \\
&& \times  \, (1-x)^{\frac{k+1}{2 k+4}}  _2F_1\left(\frac{k+1}{k+2},\frac{2 k+1}{k+2};\frac{2 (k+1)}{k+2};x\right)\ , \nonumber\\
\eea
in agreement with our earlier result \eqref{eq:III-8}. It follows that the exchange matrices one obtains from the coset construction coincide with their counterparts in the corresponding minimal model. It is instructive to confirm this using Eq.\ \eqref{eq:V-7} in order to obtain exchange matrices for general correlators in the coset CFT $SU(2)_1^{\otimes k}/SU(2)_k$.

As an example, under the transformation $x\to 1-x$, it is easy to see that $X_{\bm{m}} \leftrightarrow X_{\bm{m}'}$ and $X_{\bm{m}''} \to X_{\bm{m}''}$, where $\bm{m} = +--+$, $\bm{m}' = --++$, and $\bm{m}'' = -+-+$. Also, using Hypergeometric identities, we obtain
\be\label{eq:42} \bm{Y}_{\bm{m}} \to R_{13}^{(4)} \bm{Y}_{\bm{m}'} \ , \ \bm{Y}_{\bm{m}'} \to R_{13}^{(4)} \bm{Y}_{\bm{m}} \ , \ \bm{Y}_{\bm{m}''} \to R_{13}^{(4)} \bm{Y}_{\bm{m}''} \ee
where $\bm{Y}_{\bm{m}} = \left( \begin{array}{c}
   \sqrt{N_1} Y_{1;\bm{m}}  \\
    \sqrt{N_2} Y_{2;\bm{m}}
\end{array}
\right)$, and $R_{13}^{(4)}$ is defined in \eqref{eq:III-11}. It follows from \eqref{eq:V-7} that the conformal blocks transform under
\be\label{eq:43} \bm{\mathcal{F}}^{(4)} \to R_{13}^{(4)} \bm{\mathcal{F}}^{(4)} \ , \ \ \bm{\mathcal{F}}^{(4)} = \left( \begin{array}{c}
   \mathcal{F}_1^{(4)}  \\
    \mathcal{F}_2^{(4)}
\end{array}
\right) \ee 
as expected.

\section{Braiding conformal blocks via anyon models} \label{sec:IV}

The monodromy representations from braiding conformal blocks can also be computed using the corresponding anyon models of chiral minimal models, which are the representation categories of the chiral algebras of minimal models.  In this Section, we calculate some of the same representations in earlier sections using the graphical calculus of anyon models.

\subsection{Anyon models of minimal models}

As discussed above, chiral minimal models $\mathcal{M}(k+2,k+1)$ ($k\geq 2$) can be constructed as the coset $\frac{SU(2)_{k-1}\times SU(2)_1}{SU(2)_{k}}$, where $SU(2)_q$ is the $SU(2)$ WZW model at level $q$.  We will also use $SU(2)_q$ to denote the corresponding anyon models of chiral WZW $SU(2)_q$ CFTs.

The anyons are sometimes labeled by integers $0, 1, \ldots, q$, which are twice of the spin, and fusion rules in these labels are
\be
	j_1 \otimes j_2 = \sum_{j \stackrel{2}{=} |j_1 - j_2|}^{\min(j_1+j_2, 2q-j_1-j_2)} j,
\ee
where $\stackrel{2}{=}$ denotes incrementing the summation variable by $2$, and twist
\be
	\theta_j = e^{\pi i \frac{j\pn{j+2}}{2(q+2)}}.
\ee
As discussed in Section \ref{sec:II}, the minimal model $\M(k+2,k+1)$ has primary fields $\Phi_{(r,s)}$ with fusion rules given by Eq.\ \eqref{eq:fusion}.
\begin{lemma}\label{lemma:k-r}
	In $SU(2)_q$, $q \otimes s = r$ if and only if $s = q - r$.  Moreover, when $s \neq q - r$, the product $q \otimes s$ contains no $r$ term.
\end{lemma}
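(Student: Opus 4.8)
The plan is to work directly with the fusion rule for $SU(2)_q$ anyons labeled by integers (twice the spin), namely
\be
j_1 \otimes j_2 = \sum_{j \stackrel{2}{=} |j_1-j_2|}^{\min(j_1+j_2,\, 2q-j_1-j_2)} j \ ,
\ee
and specialize to $j_1 = q$, $j_2 = s$. The key observation is that the label $q$ is the unique ``top'' anyon, and fusing with it acts as an invertible symmetry (a $\mathbb{Z}_2$ grading), so one expects $q \otimes s$ to be a single label. First I would compute the endpoints of the summation range for $q \otimes s$: the lower limit is $|q-s| = q-s$ (since $0 \le s \le q$), and the upper limit is $\min(q+s,\, 2q - q - s) = \min(q+s,\, q-s) = q-s$. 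Hence the lower and upper limits coincide, the sum has exactly one term, and $q \otimes s = q - s$. This immediately gives both halves of the statement: $q \otimes s$ contains $r$ if and only if $r = q - s$, i.e.\ $s = q - r$; and when $s \neq q - r$ the single term $q - s$ is not equal to $r$, so there is no $r$ in the product.

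The key steps, in order, are: (1) recall $0 \le s \le q$ so that $|q - s| = q - s$; (2) evaluate $\min(q+s, 2q - q - s) = \min(q+s, q-s)$, and note $q - s \le q + s$ always, so this minimum is $q - s$; (3) conclude the summation index $j$ runs from $q-s$ to $q-s$ in steps of $2$, hence takes the single value $j = q-s$, so $q \otimes s = q - s$ as a single anyon; (4) read off the ``if and only if'': $r$ appears in $q \otimes s$ exactly when $r = q - s$, equivalently $s = q - r$; (5) for the ``moreover'' clause, observe that if $s \neq q - r$ then the unique term $q - s$ differs from $r$, so $q \otimes s$ contains no $r$ term. One should also check the admissibility constraint $0 \le q - s \le q$, which holds since $0 \le s \le q$, confirming $q - s$ is a legitimate label.

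I do not anticipate a genuine obstacle here; the proof is a short computation with the fusion-rule bounds. The only point requiring a moment's care is confirming that the lower limit does not exceed the upper limit — but here they are literally equal, so the sum is nonempty and has exactly one term. It is also worth noting in passing that this lemma is the anyonic incarnation of the fact that the simple current $q$ generates a $\mathbb{Z}_2$ symmetry of the $SU(2)_q$ fusion category, with $q \otimes q = 0$; the statement $q \otimes s = q - s$ is exactly the action of this order-two simple current, which is why it is a bijection on labels and in particular single-valued.
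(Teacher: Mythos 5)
Your proof is correct and follows essentially the same route as the paper: both arguments specialize the $SU(2)_q$ fusion-rule bounds to $j_1=q$ and observe that the lower limit $|q-s|=q-s$ coincides with the upper limit $\min(q+s,q-s)=q-s$, so the product collapses to the single label $q-s$. Your organization (proving $q\otimes s=q-s$ once and reading off both directions) is marginally cleaner than the paper's case split on $s<q-r$ versus $s>q-r$, but it is the same computation.
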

\begin{proof}
	\begin{enumerate}
		\item[($\Leftarrow$)]
			Observe that
			\[
				q \otimes (q-r) = \sum_{j \stackrel{2}{=} r}^{\min(2k-r, r)} j = r \ ,
			\]
since $\min(2q-r,r) = r$, for all $r = 0, \ldots, q$.
		\item[($\Rightarrow$)]
			We have
			\[
				q \otimes s = \sum_{j \stackrel{2}{=} q - s}^{\min(q+s, q-s)} j.
			\]
			If $s < q - r$, then $q - s > r$, and there is no $r$ term in the product $q \otimes s$.  If $s > q - r$, then $q - s < r$, so $\min(q+s,q-s) < r$, and there is no $r$ term in the product $q \otimes s$.
	\end{enumerate}
\end{proof}

\begin{prop}\label{prop:fusionRules}
	If $\B$ is the anyon model $\B= SU(2)_{k-1} \times SU(2)_1 \times \overline{SU(2)_{k}}$ and $\A = 000 + (k-1)1k$, then $\A$ has a condensable algebra structure. The condensed category is $\B_{\A} = \B_0 \oplus \B_1$, where the deconfined part $\B_0$ has the same fusion rules as the Minimal Model $\M(k+2,k+1)$.  The twists of the anyons of $\B_0$ agree with those of the corresponding ones in the minimal model $\M(k+2,k+1)$.
\end{prop}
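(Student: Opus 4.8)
The plan is to exploit the general theory of anyon condensation (the Bais--Slingerland / Kong formalism) applied to the specific \'etale algebra $\A = 000 + (k-1)1k$ inside $\B= SU(2)_{k-1} \times SU(2)_1 \times \overline{SU(2)_{k}}$, and then to reduce the identification of $\B_0$ with the anyon model of $\M(k+2,k+1)$ to a purely combinatorial check on labels. First I would verify that $\A$ is a condensable (connected \'etale/commutative Frobenius) algebra: the object is a direct sum of two simple objects $\mathbf{1}=000$ and $X=(k-1)1k$; one checks $X\otimes X$ contains $\mathbf{1}$ exactly once so that a multiplication can exist, that the twists on $X$ are trivial (using $\theta_j = e^{\pi i j(j+2)/2(q+2)}$ at the three levels $q=k-1,1,k$, with the bar giving a complex conjugate, this is a short arithmetic identity), and that the quantum dimensions multiply consistently. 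Connectedness $\dim\Hom(\mathbf 1,\A)=1$ is immediate. This is the ``$\A$ has a condensable algebra structure'' clause.

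Next I would describe the condensed category $\B_\A$ concretely. Its objects are $\A$-modules in $\B$; because $\A=\mathbf 1\oplus X$ with $X$ invertible in the appropriate sense, the free module functor $Y\mapsto \A\otimes Y$ identifies simple $\A$-modules with orbits of simple objects of $\B$ under tensoring by $X$, together with a $\Z/2$ choice when an object is fixed by $X$ (this fixed-point splitting is exactly what produces the grading $\B_\A=\B_0\oplus\B_1$: $\B_0$ is the subcategory of local/deconfined modules and $\B_1$ the non-local ones). I would then set up the explicit dictionary: a triple $(a,\epsilon,\bar b)$ with $a$ a $SU(2)_{k-1}$ label, $\epsilon\in\{0,1\}$ a $SU(2)_1$ label, $\bar b$ a $\overline{SU(2)_k}$ label gets identified with $(k-1-a,1-\epsilon,k-\bar b)$ via $X$. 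Deconfinement (locality) is the condition that the monodromy with $\A$ is trivial, which is the statement that the twist $\theta$ is constant on the orbit; a standard computation with the $SU(2)$ twist formula shows this holds precisely when $a+\epsilon+\bar b$ is even (the usual integer/half-integer spin matching), and those orbits are in bijection with pairs $(r,s)$, $1\le r\le k$, $1\le s\le k+1$, after using Lemma \ref{lemma:k-r} to see that on the deconfined part the $SU(2)_1$ index $\epsilon$ and one of $a,\bar b$ are redundant. This gives the object-level bijection $\B_0 \leftrightarrow \{\Phi_{(r,s)}\}$.

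Then I would match fusion rules. Fusion in $\B_\A$ is $\otimes_\A$, which for free modules is computed by fusing in $\B$ and then projecting onto $\A$-modules (i.e.\ quotienting by the $X$-action). Since fusion in a product category is componentwise, the fusion coefficients are products of the three $SU(2)$ fusion coefficients given by the truncated Clebsch--Gordan rule; I would show this product, after the identification and the $X$-quotient, collapses exactly to the doubly-truncated rule \eqref{eq:fusion} for $\M(k+2,k+1)$. Lemma \ref{lemma:k-r} is the key input here: it controls when fusing with the ``top'' label $k-1$ or $k$ reproduces a given label, which is what forces the two $SU(2)$-level truncations to combine into the $(p,q)=(k+2,k+1)$ Kac-table truncation. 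Finally, for the twists: the twist of an $\A$-module is inherited from the twist in $\B$ (well-defined on $\B_0$ precisely because locality makes it constant on orbits), and it equals $\theta^{SU(2)_{k-1}}_a \theta^{SU(2)_1}_\epsilon \overline{\theta^{SU(2)_k}_{\bar b}}$; plugging the twist formula and simplifying should reproduce $e^{2\pi i h_{(r,s)}}$ with $h_{(r,s)}$ from \eqref{eq:II-3}, which is essentially the classical coset/GKO computation of conformal weights $h_{(r,s)} = h^{k-1}_a + h^1_\epsilon - h^k_{\bar b}$.

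The main obstacle, and where I would spend the most care, is the fixed-point/orbit bookkeeping: showing that exactly the right objects split, that the count of simple objects in $\B_0$ is $|\{(r,s)\}| = k(k+1)/\text{(appropriate identification)}$ with no overcounting, and that the combined two-step truncation of the $SU(2)$ fusion coefficients really is the minimal-model fusion rule \eqref{eq:fusion} rather than merely containing it. Equivalently, one must be sure the algebra $\A$ is \emph{Lagrangian}-like enough on the $SU(2)_1$ and top-label directions to kill the redundancy but not so large as to collapse $r$ or $s$ themselves; making Lemma \ref{lemma:k-r} do exactly this work, uniformly in $k$, is the crux. Everything else is either a finite arithmetic identity on twists and quantum dimensions or an appeal to the standard structure theorems for module categories over a connected \'etale algebra.
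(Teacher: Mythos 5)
Your proposal is correct and takes essentially the same route as the paper: the paper's proof consists precisely of your final combinatorial step---pick the parity-even representative $rst \sim \Phi_{(r,t)}$ and observe that the componentwise truncated Clebsch--Gordan rules in the $SU(2)_{k-1}$ and $\overline{SU(2)_k}$ factors reproduce the Kac-table truncation of Eq.~\eqref{eq:fusion}---while leaving the condensability of $\A$, the locality criterion ($a+\epsilon+\bar b$ even), and the twist matching as implicit standard facts that you spell out. One reassurance on the point you flag as the crux: the fixed-point splitting never occurs, since a fixed point of tensoring by $(k-1)1k$ would force $\epsilon = 1-\epsilon$ in the $SU(2)_1$ factor, so all orbits have size two and the $k(k+1)/2$ deconfined simples match the Kac table under $(r,s)\sim(k+1-r,k+2-s)$ with multiplicities preserved (only one of $Y$ and $X\otimes Y$ can appear in any componentwise fusion product, because the $SU(2)_1$ fusion channel is unique).
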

\begin{proof}

			For each $r = 0, \ldots, k-1$ and $t = 0, \ldots, k$, we may uniquely choose $s = 0$ or $s = 1$ so that $r+s+t$ is even.  Then, we may identify $rst \sim \Phi_{(r,t)}$ in $\M(k+2,k+1)$.  We have
			\[
				rst \otimes mnp = \sum_{j \stackrel{2}{=} |r-m|}^{\min(r+m, 2k-2-r-m)} \sum_{l \stackrel{2}{=} |t-p|}^{\min(t+p, 2k-t-p)} jsl,
			\]
			where $s$ is chosen to make $j+s+l$ even, and
			\[
				\Phi_{(r,t)} \otimes \Phi_{(m,p)} = \sum_{j \stackrel{2}{=} |m-r|}^{\min(m+r, 2k-2-m-r)} \sum_{l \stackrel{2}{=} |p-t|}^{\min(p+t, 2k-p-t)} \Phi_{(j,l)}.
			\]
\end{proof}
The conformal weights are given by Eq.\ \eqref{eq:II-3}. 
\footnote{Page 240 of \cite{bib:28} gives another formula for $h$ that seems to suggest $k+1$ and $k+2$ should be switched.  The choice we made here makes the twists agree.  Possibly relevant is Eq.\ (7.36) on page 209.}

\subsection{Braiding universality of minimal models}

The anyon model of the tricritical Ising model $\mathcal{M}(5,4)$ is the direct product of the Ising anyon model with the complex conjugate of the Fibonacci anyon model.  The anyon types of the Ising theory are usually denoted as $1,\sigma,\psi$, where $\sigma$ is the non-abelian anyon, while the anyon types of the Fibonacci are $1,\tau$.  Using the conformal weights $h$, we can identify the corresponding anyon of the primary field $\Phi_{(1,2)}$ with $h=\frac{1}{10}$ as $\psi \boxtimes \bar{\tau}$, and $\Phi_{(1,3)}$ as $1\boxtimes \bar{\tau}$.  Because $\psi$ is a fermion, the monodromy representations from braiding $\psi \boxtimes \bar{\tau}$ are equivalent to the braidings of $\bar{\tau}$ up to phases.

The Fibonacci anyon $\tau$ is universal for quantum computation by braiding alone \cite{bib:2}.  Since the monodromy representations of braiding its complex conjugate $\bar{\tau}$ is simply the complex conjugate, $\bar{\tau}$ is also universal, thus it follows that $\psi \boxtimes \bar{\tau}$ is universal for quantum computation by braiding alone as well.


From our identification above, we can compute the monodromy representations of conformal blocks from braid group representations of Fibonacci anyons up to overall phase factors.

In the graphical calculus of anyon models, conformal blocks are represented as fusion trees labeled by anyons.  The labeled trees in earlier sections can be directly translated into the fusion trees in anyon language.

To compute braid group representations, we select a basis of fusion trees and begin with the representatives for the braid group generators.  For example, we may compute the representation of $B_4$, the braid group on four strands, in the following orthonormal basis of Fibonacci fusion trees.
\[
    \bigbrace{
        \stikz{
            \node (X) at (0,-1.25) {};
            \node[above] at (0,-0.75) {$\tau$};
            \node[above] at (0.5,-0.75) {$\tau$};
            \node[above] at (1,-0.75) {$\tau$};
            \node[above] at (1.5,-0.75) {$\tau$};
            \draw (0,-0.75) -- (0.25,-1);
            \draw (0.5,-0.75) -- (0.25,-1);
            \draw (1,-0.75) -- (1.25,-1);
            \draw (1.5,-0.75) -- (1.25,-1);
            \draw (0.25,-1) -- (0.75,-1.5);
            \draw (1.25,-1) -- (0.75,-1.5);
            \node[left] at (0.5,-1.3) {\small $1$};
            \node[right] at (1,-1.3) {$\tau$};
            \draw (0.75,-1.5) -- (0.75,-1.75);
            \node[below] at (0.75,-1.75) {$\tau$};
        },
        \stikz{
            \node (X) at (0,-1.25) {};
            \node[above] at (0,-0.75) {$\tau$};
            \node[above] at (0.5,-0.75) {$\tau$};
            \node[above] at (1,-0.75) {$\tau$};
            \node[above] at (1.5,-0.75) {$\tau$};
            \draw (0,-0.75) -- (0.25,-1);
            \draw (0.5,-0.75) -- (0.25,-1);
            \draw (1,-0.75) -- (1.25,-1);
            \draw (1.5,-0.75) -- (1.25,-1);
            \draw (0.25,-1) -- (0.75,-1.5);
            \draw (1.25,-1) -- (0.75,-1.5);
            \node[left] at (0.5,-1.3) {$\tau$};
            \node[right] at (1,-1.3) {\small $1$};
            \draw (0.75,-1.5) -- (0.75,-1.75);
            \node[below] at (0.75,-1.75) {$\tau$};
        },
        \stikz{
            \node (X) at (0,-1.25) {};
            \node[above] at (0,-0.75) {$\tau$};
            \node[above] at (0.5,-0.75) {$\tau$};
            \node[above] at (1,-0.75) {$\tau$};
            \node[above] at (1.5,-0.75) {$\tau$};
            \draw (0,-0.75) -- (0.25,-1);
            \draw (0.5,-0.75) -- (0.25,-1);
            \draw (1,-0.75) -- (1.25,-1);
            \draw (1.5,-0.75) -- (1.25,-1);
            \draw (0.25,-1) -- (0.75,-1.5);
            \draw (1.25,-1) -- (0.75,-1.5);
            \node[left] at (0.5,-1.3) {$\tau$};
            \node[right] at (1,-1.3) {$\tau$};
            \draw (0.75,-1.5) -- (0.75,-1.75);
            \node[below] at (0.75,-1.75) {$\tau$};
        }
    }
\]
Given a braid, we use the Fibonacci $F$-symbols and $R$-symbols to write the result of braiding each of these basis vectors as a linear combination of the basis itself.  Thus each braid is assigned a matrix which is the change of basis matrix from a braided basis to this unbraided one.  The two generators $\sigma_1,\sigma_3$ of $B_4$ require only a single $R$ move to untangle.  The computation for the generator $\sigma_2$ goes as follows.
{\allowdisplaybreaks
\begin{align*}
    \stikz{
        \node (X) at (0,-0.875) {};
        \pic[braid/number of strands=4,braid/anchor=1-1,braid/width=0.5cm,braid/height=0.5cm] {braid={s_2^{-1}}};
        \node[left] at (0,-0.75) {$\tau$};
        \node[right] at (0.5,-0.75) {$\tau$};
        \node[right] at (1,-0.75) {$\tau$};
        \node[right] at (1.5,-0.75) {$\tau$};
        \draw (0,-0.75) -- (0.25,-1);
        \draw (0.5,-0.75) -- (0.25,-1);
        \draw (1,-0.75) -- (1.25,-1);
        \draw (1.5,-0.75) -- (1.25,-1);
        \draw (0.25,-1) -- (0.75,-1.5);
        \draw (1.25,-1) -- (0.75,-1.5);
        \node[left] at (0.5,-1.3) {\small $1$};
        \node[right] at (1,-1.3) {$\tau$};
        \draw (0.75,-1.5) -- (0.75,-1.75);
        \node[right] at (0.75,-1.75) {$\tau$};
    }
    &=
        F_{\tau;\tau\tau}^{1\tau\tau}
        \stikz{
            \node (X) at (0,-0.875) {};
            \pic[braid/number of strands=4,braid/anchor=1-1,braid/width=0.5cm,braid/height=0.5cm] {braid={s_2^{-1}}};
            \node[left] at (0,-0.75) {$\tau$};
            \node[right] at (0.5,-0.75) {$\tau$};
            \node[right] at (1,-0.75) {$\tau$};
            \node[right] at (1.5,-0.75) {$\tau$};
            \draw (0,-0.75) -- (0.25,-1);
            \draw (0.5,-0.75) -- (0.25,-1);
            \draw (1,-0.75) -- (0.5,-1.25);
            \draw (1.5,-0.75) -- (1.25,-1);
            \draw (0.25,-1) -- (0.75,-1.5);
            \draw (1.25,-1) -- (0.75,-1.5);
            \node[left] at (0.375,-1.15375) {\small $1$};
            \node[left] at (0.625,-1.4375) {$\tau$};
            \draw (0.75,-1.5) -- (0.75,-1.75);
            \node[right] at (0.75,-1.75) {$\tau$};
        } \\
    =
        F_{\tau;11}^{\tau\tau\tau}
        &\stikz{
            \node (X) at (0,-0.875) {};
            \pic[braid/number of strands=4,braid/anchor=1-1,braid/width=0.5cm,braid/height=0.5cm] {braid={s_2^{-1}}};
            \node[left] at (0,-0.75) {$\tau$};
            \node[left] at (0.5,-0.75) {$\tau$};
            \node[right] at (1,-0.75) {$\tau$};
            \node[right] at (1.5,-0.75) {$\tau$};
            \draw (0,-0.75) -- (0.25,-1);
            \draw (0.5,-0.75) -- (0.75,-1);
            \draw (1,-0.75) -- (0.5,-1.25);
            \draw (1.5,-0.75) -- (1.25,-1);
            \draw (0.25,-1) -- (0.75,-1.5);
            \draw (1.25,-1) -- (0.75,-1.5);
            \node[right] at (0.625,-1.15375) {\small $1$};
            \node[left] at (0.625,-1.4375) {$\tau$};
            \draw (0.75,-1.5) -- (0.75,-1.75);
            \node[right] at (0.75,-1.75) {$\tau$};
        }
        +
        F_{\tau;\tau1}^{\tau\tau\tau}
        \stikz{
            \node (X) at (0,-0.875) {};
            \pic[braid/number of strands=4,braid/anchor=1-1,braid/width=0.5cm,braid/height=0.5cm] {braid={s_2^{-1}}};
            \node[left] at (0,-0.75) {$\tau$};
            \node[left] at (0.5,-0.75) {$\tau$};
            \node[right] at (1,-0.75) {$\tau$};
            \node[right] at (1.5,-0.75) {$\tau$};
            \draw (0,-0.75) -- (0.25,-1);
            \draw (0.5,-0.75) -- (0.75,-1);
            \draw (1,-0.75) -- (0.5,-1.25);
            \draw (1.5,-0.75) -- (1.25,-1);
            \draw (0.25,-1) -- (0.75,-1.5);
            \draw (1.25,-1) -- (0.75,-1.5);
            \node[right] at (0.625,-1.15375) {$\tau$};
            \node[left] at (0.625,-1.4375) {$\tau$};
            \draw (0.75,-1.5) -- (0.75,-1.75);
            \node[right] at (0.75,-1.75) {$\tau$};
        } \\
    =
        F_{\tau;11}^{\tau\tau\tau}
        &R_1^{\tau\tau}
        \stikz{
            \node (X) at (0,-1.25) {};
            \node[above] at (0,-0.75) {$\tau$};
            \node[above] at (0.5,-0.75) {$\tau$};
            \node[above] at (1,-0.75) {$\tau$};
            \node[above] at (1.5,-0.75) {$\tau$};
            \draw (0,-0.75) -- (0.25,-1);
            \draw (0.5,-0.75) -- (0.75,-1);
            \draw (1,-0.75) -- (0.5,-1.25);
            \draw (1.5,-0.75) -- (1.25,-1);
            \draw (0.25,-1) -- (0.75,-1.5);
            \draw (1.25,-1) -- (0.75,-1.5);
            \node[right] at (0.625,-1.15375) {\small $1$};
            \node[left] at (0.625,-1.4375) {$\tau$};
            \draw (0.75,-1.5) -- (0.75,-1.75);
            \node[below] at (0.75,-1.75) {$\tau$};
        }
        +
        F_{\tau;\tau1}^{\tau\tau\tau}
        R_{\tau}^{\tau\tau}
        \stikz{
            \node (X) at (0,-1.25) {};
            \node[above] at (0,-0.75) {$\tau$};
            \node[above] at (0.5,-0.75) {$\tau$};
            \node[above] at (1,-0.75) {$\tau$};
            \node[above] at (1.5,-0.75) {$\tau$};
            \draw (0,-0.75) -- (0.25,-1);
            \draw (0.5,-0.75) -- (0.75,-1);
            \draw (1,-0.75) -- (0.5,-1.25);
            \draw (1.5,-0.75) -- (1.25,-1);
            \draw (0.25,-1) -- (0.75,-1.5);
            \draw (1.25,-1) -- (0.75,-1.5);
            \node[right] at (0.625,-1.15375) {$\tau$};
            \node[left] at (0.625,-1.4375) {$\tau$};
            \draw (0.75,-1.5) -- (0.75,-1.75);
            \node[below] at (0.75,-1.75) {$\tau$};
        } \\
    &\hspace{-2.5cm}=
        \Big(
            F_{\tau;11}^{\tau\tau\tau}
            R_1^{\tau\tau}
            \pn{F_{\tau}^{\tau\tau\tau}}^{-1}_{11}
            +
            F_{\tau;\tau1}^{\tau\tau\tau}
            R_{\tau}^{\tau\tau}
            \pn{F_{\tau}^{\tau\tau\tau}}^{-1}_{1\tau}
        \Big)
        \stikz{
            \node (X) at (0,-1.25) {};
            \node[above] at (0,-0.75) {$\tau$};
            \node[above] at (0.5,-0.75) {$\tau$};
            \node[above] at (1,-0.75) {$\tau$};
            \node[above] at (1.5,-0.75) {$\tau$};
            \draw (0,-0.75) -- (0.25,-1);
            \draw (0.5,-0.75) -- (0.25,-1);
            \draw (1,-0.75) -- (1.25,-1);
            \draw (1.5,-0.75) -- (1.25,-1);
            \draw (0.25,-1) -- (0.75,-1.5);
            \draw (1.25,-1) -- (0.75,-1.5);
            \node[left] at (0.5,-1.3) {\small $1$};
            \node[right] at (1,-1.3) {$\tau$};
            \draw (0.75,-1.5) -- (0.75,-1.75);
            \node[below] at (0.75,-1.75) {$\tau$};
        } \\
        \qquad
        +
        \Big(
            F_{\tau;11}^{\tau\tau\tau}
            &R_1^{\tau\tau}
            \pn{F_{\tau}^{\tau\tau\tau}}^{-1}_{\tau1}
            +
            F_{\tau;\tau1}^{\tau\tau\tau}
            R_{\tau}^{\tau\tau}
            \pn{F_{\tau}^{\tau\tau\tau}}^{-1}_{\tau\tau}
        \Big) \\
        &\hspace{-1.75cm}\left(
            \pn{F_{\tau}^{\tau\tau\tau}}^{-1}_{1\tau}
            \stikz{
                \node (X) at (0,-1.25) {};
                \node[above] at (0,-0.75) {$\tau$};
                \node[above] at (0.5,-0.75) {$\tau$};
                \node[above] at (1,-0.75) {$\tau$};
                \node[above] at (1.5,-0.75) {$\tau$};
                \draw (0,-0.75) -- (0.25,-1);
                \draw (0.5,-0.75) -- (0.25,-1);
                \draw (1,-0.75) -- (1.25,-1);
                \draw (1.5,-0.75) -- (1.25,-1);
                \draw (0.25,-1) -- (0.75,-1.5);
                \draw (1.25,-1) -- (0.75,-1.5);
                \node[left] at (0.5,-1.3) {$\tau$};
                \node[right] at (1,-1.3) {\small $1$};
                \draw (0.75,-1.5) -- (0.75,-1.75);
                \node[below] at (0.75,-1.75) {$\tau$};
            }\right.
            \left.+
            \pn{F_{\tau}^{\tau\tau\tau}}^{-1}_{\tau\tau}
            \stikz{
                \node (X) at (0,-1.25) {};
                \node[above] at (0,-0.75) {$\tau$};
                \node[above] at (0.5,-0.75) {$\tau$};
                \node[above] at (1,-0.75) {$\tau$};
                \node[above] at (1.5,-0.75) {$\tau$};
                \draw (0,-0.75) -- (0.25,-1);
                \draw (0.5,-0.75) -- (0.25,-1);
                \draw (1,-0.75) -- (1.25,-1);
                \draw (1.5,-0.75) -- (1.25,-1);
                \draw (0.25,-1) -- (0.75,-1.5);
                \draw (1.25,-1) -- (0.75,-1.5);
                \node[left] at (0.5,-1.3) {$\tau$};
                \node[right] at (1,-1.3) {$\tau$};
                \draw (0.75,-1.5) -- (0.75,-1.75);
                \node[below] at (0.75,-1.75) {$\tau$};
            }
        \right) \\
    &=
        \mat{\gamma^{-1}e^{4\pi i/5} \\ \gamma^{-1}e^{-3\pi i/5} \\ -\gamma^{-3/2}e^{-3\pi i/5}}, \\
    \stikz{
        \node (X) at (0,-0.875) {};
        \pic[braid/number of strands=4,braid/anchor=1-1,braid/width=0.5cm,braid/height=0.5cm] {braid={s_2^{-1}}};
        \node[left] at (0,-0.75) {$\tau$};
        \node[right] at (0.5,-0.75) {$\tau$};
        \node[right] at (1,-0.75) {$\tau$};
        \node[right] at (1.5,-0.75) {$\tau$};
        \draw (0,-0.75) -- (0.25,-1);
        \draw (0.5,-0.75) -- (0.25,-1);
        \draw (1,-0.75) -- (1.25,-1);
        \draw (1.5,-0.75) -- (1.25,-1);
        \draw (0.25,-1) -- (0.75,-1.5);
        \draw (1.25,-1) -- (0.75,-1.5);
        \node[left] at (0.5,-1.3) {$\tau$};
        \node[right] at (1,-1.3) {$1$};
        \draw (0.75,-1.5) -- (0.75,-1.75);
        \node[right] at (0.75,-1.75) {$\tau$};
    }
    &=
        \mat{\gamma^{-1}e^{-3\pi i/5} \\ \gamma^{-1}e^{4\pi i/5} \\ -\gamma^{-3/2}e^{-3\pi i/5}}, \\
    \stikz{
        \node (X) at (0,-0.875) {};
        \pic[braid/number of strands=4,braid/anchor=1-1,braid/width=0.5cm,braid/height=0.5cm] {braid={s_2^{-1}}};
        \node[left] at (0,-0.75) {$\tau$};
        \node[right] at (0.5,-0.75) {$\tau$};
        \node[right] at (1,-0.75) {$\tau$};
        \node[right] at (1.5,-0.75) {$\tau$};
        \draw (0,-0.75) -- (0.25,-1);
        \draw (0.5,-0.75) -- (0.25,-1);
        \draw (1,-0.75) -- (1.25,-1);
        \draw (1.5,-0.75) -- (1.25,-1);
        \draw (0.25,-1) -- (0.75,-1.5);
        \draw (1.25,-1) -- (0.75,-1.5);
        \node[left] at (0.5,-1.3) {$\tau$};
        \node[right] at (1,-1.3) {$\tau$};
        \draw (0.75,-1.5) -- (0.75,-1.75);
        \node[right] at (0.75,-1.75) {$\tau$};
    }
    &=
        \mat{-\gamma^{-3/2}e^{-3\pi i/5} \\ -\gamma^{-3/2}e^{-3\pi i/5} \\ \gamma^{-1}e^{3\pi i/5} - \gamma^{-3}}.
\end{align*}}
Thus, the generator $\sigma_2$ has a representation
\be
    \mat{\gamma^{-1}e^{4\pi i/5} & \gamma^{-1}e^{-3\pi i/5} & -\gamma^{-3/2}e^{-3\pi i/5} \\ \gamma^{-1}e^{-3\pi i/5} & \gamma^{-1}e^{4\pi i/5} & -\gamma^{-3/2}e^{-3\pi i/5} \\ -\gamma^{-3/2}e^{-3\pi i/5} & -\gamma^{-3/2}e^{-3\pi i/5} & \gamma^{-1}e^{3\pi i/5} - \gamma^{-3}}.
\ee
in agreement with our earlier result \eqref{eq:III-23}.

\section{Fault-tolerant Quantum Computation} \label{sec:V}
For quantum computation, we need to map the conformal blocks to quantum states. Then the braiding matrices become quantum gates. The conformal blocks $\mathcal{F}_\mu (\bm{\eta})$ involving $2N$ primary fields $ \sigma = \Phi_{(1,2)}$ (Eq.\ \eqref{eq:II-9}) cannot be interpreted as wavefunctions because of their singularities. There are two types of singularities, branch cuts and poles. To define a wavefunction, we insert into the correlator $2M$ fields $\psi$ that obey abelian fusion rules at positions $\bm{z} = (z_1, \dots, z_{2M})$. These are the coordinates of the wavefunction, whereas $\bm{\eta} = (\eta_1, \dots, \eta_{2N})$ are treated as parameters. Thus, we eliminate branch cuts associated with $\bm{\eta}$. The correlator still has poles at $\bm{z}$. To eliminate them, we introduce the Jastrow factor $\mathcal{J}$ that has zeroes at the position of these poles canceling the remaining singularities. Therefore, we are led to consider the wavefunction
\be\label{eq:V-1}  \Psi_{\mu ; \bm{\eta}} (\bm{z}) \propto \mathcal{J} (\bm{\eta} ; \bm{z}) \mathcal{F}_\mu (\bm{\eta} ; \bm{z}) , \ee 
defined as a product of two chiral $2N+2M -$point amplitudes, one (to be specified) determining $\mathcal{J}$, and another one involving the $\sigma$ and $\psi$ fields,
\be\label{eq:42} \mathcal{F}_\mu^{(2N,2M)} (\bm{\eta} ; \bm{z}) = \braket{\sigma_1 \cdots \sigma_{2N} \psi_1 \cdots \psi_{2M}} \ , \ee
where $\sigma_j = \sigma (\eta_j)$ and $\psi_j = \psi (z_j)$.

Braiding matrices act as unitary transformations mixing the states $\Psi_{\mu ; \bm{\eta}}$, as long as all conformal blocks yield states in the degenerate vacuum of the system. This leads to universal quantum computation for $k>2$.

Since $\psi$ obeys Abelian fusion rules, the amplitudes $\braket{\sigma_1 \cdots \sigma_{2N} \psi_1 \cdots \psi_{2M}}$ and $\braket{\sigma_1 \cdots \sigma_{2N}}$ have the same number of conformal blocks. Diagrammatically, they are shown in Figure \ref{fig:VI-1} for four insertions of $\sigma$ ($N=2$). Moving the $\psi$ insertions to different positions does not affect the conformal blocks.

\begin{figure}[ht!]
    \centering
    \begin{tikzpicture}
  \begin{feynman}
    \vertex at (0,0) (f1) {\(\sigma_1\)};
    \vertex at (1,0) (a2);
    \vertex at (1,1) (f2) {\(\sigma_2\)};
    \vertex at (2,0) (a3);
    \vertex at (2,1) (f3) {\(\sigma_3\)};
    \vertex at (3,0) (a4);
    \vertex at (3,1) (f4) {\(\psi_1\)};
    \vertex at (4,0) (a5);
    \vertex at (4,1) (f5) {\(\psi_{2M}\)};
    \vertex at (5,0) (f6) {\(\sigma_4\)};
    \vertex at (3.5,0.4) (c1) {\( \cdots \)};
    \vertex at (1.5,-0.3) (c2) {\(  \varepsilon\)};
    \vertex at (2.5,-0.3) (c3){\(\sigma\)};
    \vertex at (3.5,-0.3) (c4){\(\sigma\)};
    
    \diagram* {
      (f1) --  (a2) --  (f2),
      (a2) -- (a3) -- (f3),
      (a3) -- (a4) -- (f4),
      (a4) -- (a5) -- (f5),
      (a5)  -- (f6)
    };
  \end{feynman}
\end{tikzpicture}
    
    \begin{tikzpicture}
  \begin{feynman}
    \vertex at (0,0) (f1) {\(\sigma_1\)};
    \vertex at (1,0) (a2);
    \vertex at (1,1) (f2) {\(\sigma_2\)};
    \vertex at (2,0) (a3);
    \vertex at (2,1) (f3) {\(\sigma_3\)};
    \vertex at (3,0) (a4);
    \vertex at (3,1) (f4) {\(\psi_1\)};
    \vertex at (4,0) (a5);
    \vertex at (4,1) (f5) {\(\psi_{2M}\)};
    \vertex at (5,0) (f6) {\(\sigma_4\)};
    \vertex at (3.5,0.4) (c1) {\( \cdots \)};
    \vertex at (1.5,-0.3) (c2) {\( \mathbb{I} \)};
    \vertex at (2.5,-0.3) (c3){\(\sigma\)};
    \vertex at (3.5,-0.3) (c4){\(\sigma\)};
    
    \diagram* {
      (f1) --  (a2) --  (f2),
      (a2) -- (a3) -- (f3),
      (a3) -- (a4) -- (f4),
      (a4) -- (a5) -- (f5),
      (a5)  -- (f6)
    };
  \end{feynman}
\end{tikzpicture}
 
    \caption{The two conformal blocks for $N=2$ and arbitrary $M$.}
    \label{fig:VI-1}
\end{figure}
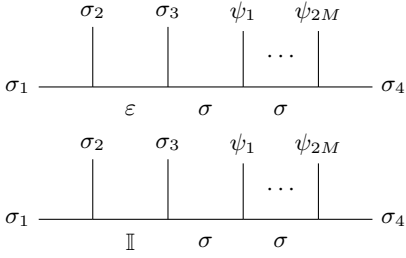

As shown in \cite{bib:20} using the plasma analogy, to construct a gapped state that will lead to fault-tolerant quantum computation, the dimension of $\psi$ must be less than $1$. In the critical Ising model $\mathcal{M}(4,3)$, this requirement is satisfied because $\psi$ can be chosen as the primary field $\Phi_{(1,3)}$ which has conformal dimension $h_{(1,3)} = \frac{1}{2} < 1$. The wavefunction \eqref{eq:V-1} is the MR wavefunction \cite{bib:14}. Unfortunately, braiding alone does not lead to universal quantum computation.

In the CFT minimal models $\mathcal{M}(k+2,k+1)$ with $k \geq 3$, we cannot identify $\psi$ with any of their primary fields, therefore we cannot construct a gapped wavefunction of the form \eqref{eq:V-1}.

A wavefunction of the form \eqref{eq:V-1} can be created from the coset CFT $SU(2)_1^{\otimes k} / SU(2)_{k}$ for all $k$, generalizing the MR wavefunction to which it reduces for $k=2$ (since the critical Ising model can be constructed from the coset CFT for $k=2$). As shown in Section \ref{sec:coamp}, the $\Phi_{(1,s)}$ primary fields with $1 \leq s \leq k+1$ in $\mathcal{M} (k+2,k+1)$ can be mapped onto primary fields in $SU(2)_1^{\otimes k} / SU(2)_{k}$ with the same fusion rules and conformal dimensions. Thus, correlators of the field $\sigma$ computed in $SU(2)_1^{\otimes k} / SU(2)_{k}$ are equivalent to those computed in the minimal model $\mathcal{M} (k+2,k+1)$. On the other hand, correlators involving the $\psi$ field must be computed in $SU(2)_1^{\otimes k} / SU(2)_{k}$. We define the $\psi$ field using
\be \label{eq:V-4}
\psi^{[ij]} = \chi^{[i]}_+ \chi^{[j]}_{-} +\chi^{[i]}_- \chi^{[j]}_{+}\ .
\ee
Evidently, \eqref{eq:V-4} does not lead to a unique definition of $\psi$ since we can consider any pair of $SU(2)_1$ factors in the coset to construct $\psi$. For desired results, one of the $SU(2)_1$ factors must be shared with the one in the definition of $\sigma$ (Eq.\ \eqref{eq:V-3a}). Since we identified $\sigma \equiv \sigma^{[k]}$, we will identify $\psi \equiv \psi^{[1k]}$. We also identify the conjugate field $\tilde{\sigma} = \Phi_{(k,k)} \equiv \sigma^{[1]}$, generalizing the critical Ising model result \cite{bib:41}. Then we obtain the fusion rules
\be \sigma \times \psi \sim \sigma \ , \ \ \psi\times\psi \sim \mathbb{I} \ , \ee 
for all $k$, same as in the critical Ising model. However, the fusion rule for $\sigma \times \sigma$ does not involve $\psi$ for $k>2$; it involves $\mathbb{I}, \varepsilon, \varepsilon' , \dots$, instead (whereas for $k=2$, $\psi = \varepsilon = \Phi_{(1,3)}$).

Using \eqref{eq:35}, we obtain its conformal weight $h_\psi = 2h_\chi = \frac{1}{2} <1$, satisfying the requirement for a gapped wavefunction \eqref{eq:V-1} that leads to fault-tolerant quantum computing.


Correlators of $\psi$ are constructed as products of $SU(2)_1$ correlators, similar to the factors of correlators of $\sigma$ (Eq.\ \eqref{eq:V-6}),
\be \label{eq:VI-6}
X^{[i]}_{\bm{m}}=
\braket{\chi^{[i]}_{m_1}(z_1) \cdots \chi^{[i]}_{m_{2M}}(z_{2M})} \ , \ \ i=1,\dots, k \ .
\ee
where $\bm{m} = (m_1,\dots, m_{2M})$. The $2M$-point $\psi$ correlator is found in terms of the Pfaffian, as in the case of the critical Ising model,
\be \label{eq:V-13}
\braket{\psi_1 \cdots \psi_{2M}} = \sum_{ \bm{m}} X^{[1]}_{\bm{m} } X^{[k]}_{\bar{\bm{m}}} = 2^{M} \text{Pf}\left({\frac{1}{z_i-z_j}}\right) \ ,
\ee
where $\bar{\bm{m}} = - \bm{m}$.

More generally, we construct the correlator $\mathcal{F}_{\mu}(\bm{\eta}; \bm{z})$  involving $2N$ $\sigma$ and $2M$ $\psi$ insertions (Eq.\ \eqref{eq:42}) as
\be \label{eq:V-14a}
\mathcal{F}_{\mu}(\bm{\eta};\bm{z}) \propto \sum_{\substack{n , m}} X_{\bm{m}}^{[1]} X_{\bar{\bm{m}},\bm{n}}^{[k]} Y_{\mu;\bm{n}} \ ,
\ee
in terms of three factors, similar to those in Eq.\ \eqref{eq:V-6},
\bea \label{eq:V-15}
X_{\bm{m}}&=&\braket{\chi_{m_1}(z_1) \cdots \chi_{m_{2M}}(z_{2M})} \nonumber \\
X_{\bar{\bm{m},\bm{n}}}&=&\braket{\chi_{n_1}(\eta_1) \cdots \chi_{n_{2N}}(\eta_{2N})\chi_{\bar{m}_1}(z_1) \cdots \chi_{\bar{m}_{2M}}(z_{2M}) } \nonumber \\
Y_{\mu ;\bm{n} } &=& \braket{\bar{\tau}_{n_1}(\eta_1) \cdots \bar{\tau}_{n_{2N}}(\eta_{2N})} \ .
\eea
Notice that the conformal block is only specified in the last factor because the fields in $SU(2)_1$ obey Abelian fusion rules. Also, we dropped the superscripts indicating which $SU(2)_1$ factor is used because it does not affect the calculation.

The remaining ingredient in the wavefunction \eqref{eq:V-1} is the Jastrow factor $\mathcal{J}$. We need a factor that cancels the poles of the conformal block $\mathcal{F}_\mu$ due to insertions of both $\sigma$ and $\psi$ fields. Following \cite{bib:20}, we define $\mathcal{J}$ as the correlation function for a free boson $\phi$,
\be \label{eq:V-16} \mathcal{J} = \langle \mathcal{V}_1 \dots \mathcal{V}_{2N} \mathcal{W}_1 \dots \mathcal{W}_{2M} \mathcal{Q} \rangle \ ,\ee
consisting of holomorphic vertices $\mathcal{V}$, $\mathcal{W}$, and a screening charge $\mathcal{Q}$, defined by
\be \label{eq:V-17}  \mathcal{V}_j = e^{i \frac{ 1}{2\sqrt{\Lambda}} \phi (\eta_j)} \ , \ \ \mathcal{W}_j = e^{i\sqrt{\Lambda} \phi (z_j)} \ , \nonumber \ee \be \label{eq:54} \mathcal{Q} = e^{-\frac{i}{\sqrt{\Lambda}} \int \frac{d^2w}{2\pi} \phi(w,\bar{w})}\ , \ee
Explicitly,
\bea
\mathcal{J}=&&\prod_{i<j}^{2M} z_{ij}^{\Lambda} \prod_{a< b}^{2N} \eta_{ab}^{\frac{1}{4\Lambda}} \prod_{a=1}^{2N} \prod_{i=1}^{2M} (\eta_a-z_i)^{\frac{1}{2}} \nonumber \\
&&\times  e^{-\frac{1}{4} \sum_{i=1}^{2M} |z_i|^2} e^{-\frac{1}{8\Lambda} \sum_{a=1}^{2N} |\eta_a|^2}
\eea
where $\Lambda$ is a positive integer that represents the inverse filling of FQHE \cite{bib:42}.

For $N=M=1$, we have a single conformal block. We obtain
\be \mathcal{F}^{(2,2)} = \mathcal{X}_{+-} Y_{+-} + \mathcal{X}_{-+} Y_{-+} \ee 
where
\be \mathcal{X}_{m_1m_2} = \sum_{m_3,m_4} X_{m_3m_4} X_{m_1m_2\bar{m}_3 \bar{m}_4} \ . \ee 
We have $Y_{+-} = Y_{-+} = \eta_{12}^{\frac{3}{2(k+2)}}$, and all $SU(2)_1$ correlators are easily computed. We obtain
\be \mathcal{F}^{(2,2)} = 2 \frac{ \eta_{12}^{\frac{1-k}{2k+4}}}{z_{12}} \frac{(\eta_1-z_1) (\eta_2-z_2) + (\eta_1-z_2) (\eta_2-z_1)}{ \sqrt{(\eta_1-z_1) (\eta_1-z_2) (\eta_2-z_1) (\eta_2-z_2)}} \ee
Notice that the exchange $\eta_1 \leftrightarrow \eta_2$ leads to the same factor as the one obtained for the propagator $\braket{\sigma_1\sigma_2}$.

The corresponding Jastrow factor is found to be
\bea \mathcal{J}^{(2,2)} &=&   \sqrt{(\eta_1-z_1) (\eta_1-z_2) (\eta_2-z_1) (\eta_2-z_2)} \nonumber \\
&& \times   z_{12}^{\Lambda}   \eta_{12}^{\frac{1}{4\Lambda}}
e^{-\frac{|z_1|^2+|z_2|^2}{4}  } e^{-\frac{   |\eta_1|^2+|\eta_2|^2}{8\Lambda}} \eea
and the wavefunction is
\be \Psi_{\eta_1,\eta_2} (z_1,z_2) \propto  \eta_{12}^{\frac{1}{4\Lambda} - \frac{k-1}{2(k+2)}}  z_{12}^{\Lambda -1 } \xi  e^{-\frac{|z_1|^2+|z_2|^2}{4}  } e^{-\frac{   |\eta_1|^2+|\eta_2|^2}{8\Lambda}} \ee 
where $\xi = (\eta_1-z_1) (\eta_2-z_2) + (\eta_1-z_2) (\eta_2-z_1)$ is a polynomial in $(z_1,z_2)$. Thus, $\Psi$ has no singularities in $(z_1,z_2)$.

For $N=2$ and $M=1$, we have two conformal blocks. Omitting overall normalization constants (\textit{cf}.\ with Eq.\ \eqref{eq:V-7}), after some algebra we obtain
\be
\mathcal{F}^{(4,2)}_\mu =  \frac{2}{z_{12}} \prod_{i=1}^2 \prod_{a=1}^4 (\eta_{a}-z_{i})^{-\frac{1}{2}} \Xi \ ,
\ee
where $\Xi$ is a polynomial in $(z_1,z_2)$,
\bea \Xi &=& \mathcal{X}_{+--+} Y_{\mu;+--+} + \mathcal{X}_{--++} Y_{\mu;--++} \nonumber\\ &&+ \mathcal{X}_{-+-+} Y_{\mu;-+-+} \ , \eea
and we defined
\bea \mathcal{X}_{+--+} &=&  \xi_{(14)(23)} X_{+--+} \nonumber\\
\mathcal{X}_{--++} &=&  \xi_{(12)(34)} X_{--++} \nonumber\\
\mathcal{X}_{-+-+} &=&  \xi_{(13)(24)} X_{-+-+} \ ,
\label{eq:62}\eea 
in terms of the polynomials
\be
\xi_{(ab)(cd)}=   (\eta_a-z_1) (\eta_b-z_1) (\eta_c-z_2) (\eta_d-z_2) + (z_1 \leftrightarrow z_2)  \ .
\ee
Notice that under the exchange $\eta_1 \leftrightarrow \eta_3$, we have $\mathcal{X}_{+--+} \leftrightarrow \mathcal{X}_{--++}$, and $\mathcal{X}_{-+-+}$ does not change, showing that $\mathcal{X}_{m_1m_2m_3m_4}$ (Eq.\ \eqref{eq:62}) have the same transformation properties as $X_{m_1m_2m_3m_4}$ (Eqs.\ \eqref{eq:V-8} and \eqref{eq:39}). Moreover, the correlators $Y_{\mu; m_1m_2m_3m_4}$ are the same as in the case of four-point amplitudes (Eq.\ \eqref{eq:42}). Therefore, the braiding rules for $\mathcal{F}_\mu^{(4,2)}$ are the same as in the absence of the field $\psi$ ($\mathcal{F}_\mu^{(4)}$, given by Eq.\ \eqref{eq:43}). The fact that braiding rules are not affected by $\psi$ is easily generalized to an arbitrary number of insertions of $\psi$ ($M > 1$).
 
The corresponding Jastrow factor is found to be
\bea \mathcal{J}^{(4,2)} &=& z_{12}^{\Lambda}  \prod_{a<b}^{4}\eta_{ab}^{\frac{1}{4\Lambda}}  \prod_{a=1}^4 \prod_{i=1}^2 (\eta_a-z_i)^{\frac{1}{2}} \nonumber \\
&& \times    
e^{-\frac{|z_1|^2+|z_2|^2}{4}  } e^{-\frac{|\eta_1|^2 +\cdots+|\eta_4|^2}{8\Lambda}} \eea
and the wavefunctions for the two conformal blocks are
\be \Psi_{\mu; \bm{\eta}} (z_1,z_2) \propto \prod_{a<b}^{4}\eta_{ab}^{\frac{1}{4\Lambda}} z_{12}^{\Lambda-1} \Xi\,      e^{-\frac{|z_1|^2+|z_2|^2}{4}  } e^{-\frac{|\eta_1|^2 +\cdots+|\eta_4|^2}{8\Lambda}}
\ee
They have no singularities for  $\Lambda \geq 1$.
In the case of the critical Ising model $(k=2)$, they reduce to the wavefunctions derived in Ref.\ \cite{bib:20}.

It is straightforward, albeit cumbersome, to generalize the above results to arbitrary $N,M$.

\section{Conclusion}\label{sec:con}
In this work, we developed a method to generalize the Moore-Read Pfaffian wavefunction \cite{bib:14} in a way that leads to fault-tolerant universal quantum computation. 

The MR wavefunction can be written in terms of correlators of the critical Ising model which is the minimal model CFT $\mathcal{M} (4,3)$. The correlators contain insertions of the field $\psi$ which has conformal weight $h = \frac{1}{2}$. Fault-tolerant quantum computing follows from the fact that the wavefunction is gapped due to the conformal weight of $\psi$ being less than one. Unfortunately, universal quantum computing cannot be achieved with braiding alone. Generalizations involving correlators of minimal models $\mathcal{M} (m+1,m)$ with $m>3$ lead to universal quantum computing, however, no field of conformal weight less than one has been identified, and therefore the resulting wavefunction is not gapped.

Instead of relying on minimal models, we constructed the wavefunction using conformal blocks of the coset $SU(2)_1^{\otimes k}/SU(2)_k$ which contain an Abelian primary field $\psi$ of conformal dimension less than one. We showed that the coset CFT $SU(2)_1^{\otimes k}/SU(2)_k$ contains a primary field of the same conformal weight, fusion rules and correlators as $\sigma \equiv \Phi_{(1,2)}$ in the minimal model for $m=k+1$. Additional, the coset CFT contains a field $\psi$ of conformal weight $h= \frac{1}{2}$ and Abelian fusion rules. These properties allowed us to use correlators of the coset CFT to generalize the MR wavefunction in a way that leads to fault-tolerant universal quantum computing.
For $k=2$ $(k=3)$ we recover the Ising (Fibonacci) anyons.

It would be interesting to find a system that will provide a physical realization of our wavefunction, similar to the realization of the MR wavefunction by the fractional quantum Hall effect at level $\nu=\frac{5}{2}$. In this respect, a comparison with the Read-Rezayi wavefunction \cite{bib:17}, which supports Fibonacci anyons (similar to our construction with $k=3$) and is realized by the FQHE at $\nu=\frac{12}{5}$ might be helpful. Work in this direction is in progress.

\acknowledgements
Research funded by ARO under grant W911NF-19-1-0397 and MURI contract W911NF-20-1-0082. G.S.\ is also partially supported by NSF grant OMA-1937008. Z.W.\ is also partially supported by NSF grants
FRG-1664351 and CCF-2006463.

\appendix 

\section{Details of calculations for the five-point amplitude}\label{app:A}
In the mimimal model $\mathcal{M}(k+2,k+1)$ with $k \geq 3$, the five-point correlator $\braket{\sigma_1 \sigma_2 \sigma_3 \sigma_4 \varepsilon_5}$ (Eq.\ \eqref{eq:III-16}) has 3 conformal blocks shown in Figure \ref{fig:III-3}.  The calculation of the exchange matrices involving the points $\eta_1$, $\eta_2$ and $\eta_3$ can be simplified by working in the $\eta_5\to \eta_4$ limit. To this end, we need to change bases so that $\sigma_4$ and $\varepsilon_5$ fuse together. A suitable change of basis is shown in Figure \ref{fig:A-1}. It involves four-point amplitudes that can be found explicitly.
\begin{figure}[ht]
    \centering
    \begin{tikzpicture}
    \begin{feynman}
    \vertex at (0,0) (a);
    \vertex at (-0.7,-0.7) (f1) {\(\varepsilon\)};
    \vertex at (-0.7,0.7) (f2) {\(\mathbb{I}\)};
    \vertex at (0.7,0) (b);
    \vertex at (0.35,0.3) (c) {\(\varepsilon\)};
    \vertex at (1.4,0.7) (f3) {\(\sigma\)};
       \vertex at (1.4,-0.7) (f4) {\(\sigma\)};
    \vertex at (1.8,0)  (d2) {\(=\)};
    
    \vertex at (2.4,0.7) (g1) {\(\mathbb{I}\)};
    \vertex at (4.1,0.7) (g2) {\(\sigma\)};
    \vertex at (3.25,0.35) (na);
    \vertex at (2.4,-0.7) (g3) {\(\varepsilon\)};
    \vertex at (4.1,-0.7) (g4) {\(\sigma\)};
    \vertex at (3.25,-0.35) (nb);
    \vertex at (3,0) (nc) {\(\sigma\)};
    \diagram* {
      (f1) --  (a) --  (f2),
      (a)  -- (b),
      (f3) -- (b) -- (f4),
      (g1) --  (na) --  (g2),
      (na) -- (nb),
      (g3) --  (nb) --  (g4),
    };
    \end{feynman}
    \end{tikzpicture}

    \begin{tikzpicture}
    \begin{feynman}
    \vertex at (0,0) (a);
    \vertex at (-0.7,-0.7) (f1) {\(\varepsilon\)};
    \vertex at (-0.7,0.7) (f2) {\(\varepsilon\)};
    \vertex at (0.7,0) (b);
    \vertex at (0.35,0.3) (c) {\(\mathbb{I}\)};
    \vertex at (1.4,0.7) (f3) {\(\sigma\)};
    \vertex at (1.4,-0.7) (f4) {\(\sigma\)};
    \vertex at (1.6,0)  (d2) {\(=\)};
    
    \vertex at (2.2,0) (nd) {\(D_{11}\)};
    \vertex at (2.4,0.7) (g1) {\(\varepsilon\)};
    \vertex at (4.1,0.7) (g2) {\(\sigma\)};
    \vertex at (3.25,0.35) (na);
    \vertex at (2.4,-0.7) (g3) {\(\varepsilon\)};
    \vertex at (4.1,-0.7) (g4) {\(\sigma\)};
    \vertex at (3.25,-0.35) (nb);
    \vertex at (3,0) (nc) {\(\sigma\)};
    
    \vertex at (4.3,0) (q) {\(+\)};
    \vertex at (4.8,0) (nq) {\(D_{12}\)};
    \vertex at (5,0.7) (ng1) {\(\varepsilon\)};
    \vertex at (6.7,0.7) (ng2) {\(\sigma\)};
    \vertex at (5.85,0.35) (nna);
    \vertex at (5,-0.7) (ng3) {\(\varepsilon\)};
    \vertex at (6.7,-0.7) (ng4) {\(\sigma\)};
    \vertex at (5.85,-0.35) (nnb);
    \vertex at (5.6,0) (nnc) {\(\varepsilon'\)};
    \diagram* {
      (f1) --  (a) --  (f2),
      (a)  -- (b),
      (f3) -- (b) -- (f4),
      (g1) --  (na) --  (g2),
      (na) -- (nb),
      (g3) --  (nb) --  (g4),
      (ng1) --  (nna) --  (ng2),
      (nna) -- (nnb),
      (ng3) --  (nnb) --  (ng4),
    };
    \end{feynman}
    \end{tikzpicture}
    
    \begin{tikzpicture}
    \begin{feynman}
    \vertex at (0,0) (a);
    \vertex at (-0.7,-0.7) (f1) {\(\varepsilon\)};
    \vertex at (-0.7,0.7) (f2) {\(\varepsilon\)};
    \vertex at (0.7,0) (b);
    \vertex at (0.35,0.3) (c) {\(\varepsilon\)};
    \vertex at (1.4,0.7) (f3) {\(\sigma\)};
    \vertex at (1.4,-0.7) (f4) {\(\sigma\)};
    \vertex at (1.6,0)  (d2) {\(=\)};
    
    \vertex at (2.2,0) (nd) {\(D_{21}\)};
    \vertex at (2.4,0.7) (g1) {\(\varepsilon\)};
    \vertex at (4.1,0.7) (g2) {\(\sigma\)};
    \vertex at (3.25,0.35) (na);
    \vertex at (2.4,-0.7) (g3) {\(\varepsilon\)};
    \vertex at (4.1,-0.7) (g4) {\(\sigma\)};
    \vertex at (3.25,-0.35) (nb);
    \vertex at (3,0) (nc) {\(\sigma\)};
    
    \vertex at (4.3,0) (q) {\(+\)};
    \vertex at (4.8,0) (nq) {\(D_{22}\)};
    \vertex at (5,0.7) (ng1) {\(\varepsilon\)};
    \vertex at (6.7,0.7) (ng2) {\(\sigma\)};
    \vertex at (5.85,0.35) (nna);
    \vertex at (5,-0.7) (ng3) {\(\varepsilon\)};
    \vertex at (6.7,-0.7) (ng4) {\(\sigma\)};
    \vertex at (5.85,-0.35) (nnb);
    \vertex at (5.6,0) (nnc) {\(\varepsilon'\)};
    \diagram* {
      (f1) --  (a) --  (f2),
      (a)  -- (b),
      (f3) -- (b) -- (f4),
      (g1) --  (na) --  (g2),
      (na) -- (nb),
      (g3) --  (nb) --  (g4),
      (ng1) --  (nna) --  (ng2),
      (nna) -- (nnb),
      (ng3) --  (nnb) --  (ng4),
    };
    \end{feynman}
    \end{tikzpicture}

    \caption{Basis change for the mixed four-point correlators. The LHS of the first, second and third line are denoted as $\mathcal{K}_1$, $\mathcal{K}_2$ and $\mathcal{K}_3$ respectively. On the RHS we have the correlators $\mathcal{K}'_1$, $\mathcal{K}'_2$ and $\mathcal{K}'_3$. }
    \label{fig:A-1}
\end{figure}
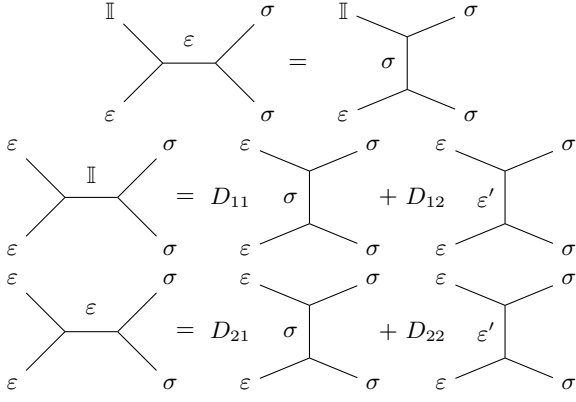
 The first correlator $\mathcal{K}_1$ transforms trivially. The other two correlators, $\mathcal{K}_2$ and $\mathcal{K}_3$, tranform to $\mathcal{K}_2'$ and $\mathcal{K}_3'$ via a matrix $D$, as shown. Working in the Coulomb gas formalism, after fixing three points, we obtain the four-point correlators in terms of Hypergeometric functions,
 \begin{widetext}
 \bea
\mathcal{K}_2(x)&=& x^{-\frac{2k}{k+2}} (1-x)^{\frac{k+1}{k+2}} \int_1^\infty dw w^{\frac{2k}{k+2}} (w-x)^{-\frac{2k+2}{k+2}} (w-1)^{-\frac{k+1}{k+2}} \nonumber\\
&=& \frac{ \Gamma \left(\frac{1}{k+2}\right)^2}{\Gamma \left(\frac{2}{k+2}\right)} (1-x)^{-\frac{k}{k+2}} x^{-\frac{2 k}{k+2}}   {}_2F_1\left(\frac{1}{k+2},-\frac{2 k}{k+2};\frac{2}{k+2};x\right) \ , \eea
\bea\mathcal{K}_3(x) &=& x^{-\frac{2k}{k+2}} (1-x)^{\frac{k+1}{k+2}}   \int_0^x dw w^{\frac{2k}{k+2}} (x-w)^{-\frac{2k+2}{k+2}} (1-w)^{-\frac{k+1}{k+2}} \nonumber\\
&=& \frac{ \Gamma \left(-\frac{k}{k+2}\right) \Gamma \left(\frac{3 k+2}{k+2}\right) }{\Gamma \left(\frac{2 (k+1)}{k+2}\right)} (1-x)^{\frac{k+1}{k+2}} x^{-\frac{k}{k+2}}  {}_2F_1\left(\frac{k+1}{k+2},\frac{3 k+2}{k+2};\frac{2 (k+1)}{k+2};x\right) \ ,
\eea
\bea
\mathcal{K}'_2(x) &=& x^{-\frac{2k}{k+2}} (1-x)^{\frac{k+1}{k+2}}  \int_x^1 dw w^{\frac{2k}{k+2}} (w-x)^{-\frac{2k+2}{k+2}} (1-w)^{-\frac{k+1}{k+2}} \nonumber\\
&=&  \frac{ \Gamma \left(\frac{1}{k+2}\right) \Gamma \left(-\frac{k}{k+2}\right)  }{\Gamma \left(\frac{1-k}{k+2}\right)} (1-x)^{-\frac{k}{k+2}} x^{-\frac{k}{k+2}}  {}_2F_1\left(-\frac{k}{k+2},\frac{k+1}{k+2};\frac{1-k}{k+2};1-x\right) \ , \eea
\bea\mathcal{K}'_3(x) &=& x^{-\frac{2k}{k+2}} (1-x)^{-\frac{k}{k+2}}  \int_x^1 dw w^{-\frac{2k+2}{k+2}} (w-x)^{\frac{2k}{k+2}} (1-w)^{-\frac{k+1}{k+2}} \nonumber\\
&=&  \frac{ \Gamma \left(\frac{1}{k+2}\right) \Gamma \left(\frac{3 k+2}{k+2}\right) }{\Gamma \left(\frac{3 (k+1)}{k+2}\right)} (1-x)^{\frac{k+1}{k+2}} x^{-\frac{2 k}{k+2}}  {}_2F_1\left(\frac{1}{k+2},\frac{2 (k+1)}{k+2};\frac{3 (k+1)}{k+2};1-x\right) \ .
\eea
 \end{widetext}
After some algebra we find
\be
D_{11}=  \frac{\sin\frac{2\pi}{k+2}}{\sin \frac{3\pi}{k+2} }  ,\ 
D_{12} =  \frac{\sin\frac{4\pi}{k+2}}{\sin \frac{3\pi}{k+2} }  ,\ 
D_{21} = D_{22} =  -\frac{\sin\frac{\pi}{k+2}}{\sin \frac{3\pi}{k+2} }  \ .
\ee
After applying this basis change to the five-point correlators depicted in Figure \ref{fig:III-3}, and using the OPE $ \sigma (\eta_4) \varepsilon(\eta_5) \sim \eta_{45}^{-\frac{k}{(k+2)}} \sigma(\eta_4) + \eta_{45}^{ \frac{k+1}{k+2}} \varepsilon' (\eta_4) $, we deduce in the limit $\eta_5\to \eta_4$
\bea\label{eq:A4}
\mathcal{F}_1^{(5)} &\approx &\eta_{45}^{-\frac{k}{k+2}} \mathcal{F}_1^{(4)} \ , \nonumber \\
\mathcal{F}_2^{(5)} &\approx & \eta_{45}^{-\frac{k}{k+2}} D_{11} \mathcal{F}_2^{(4)}  + \eta_{45}^{\frac{k+1}{k+2}} D_{12} \mathcal{F}_3^{(4)} \ ,\nonumber \\ 
\mathcal{F}_3^{(5)} &\approx & \eta_{45}^{-\frac{k}{k+2}} D_{21} \mathcal{F}_2^{(4)}  + \eta_{45}^{\frac{k+1}{k+2}} D_{22} \mathcal{F}_3^{(4)} 
\eea
where the four-point correlators $\mathcal{F}_1^{(4)} $, $\mathcal{F}_2^{(4)} $ are depicted in Figure \ref{fig:III-1} and  $\mathcal{F}_3^{(4)} $ is depicted in Figure \ref{fig:F0}.
\begin{figure}[H]
    \centering
    \begin{tikzpicture}
  \begin{feynman}
    \vertex at (0,0) (a);
    \vertex at (-1,-1) (f1) {\(\sigma_1\)};
    \vertex at (-1,1) (f2) {\(\sigma_2\)};
    \vertex at (1,0) (b);
    \vertex at (0.5,0.3) (c) {\(\varepsilon\)};
    \vertex at (2,1) (f3) {\(\tau_3\)};
    \vertex at (2,-1) (f4) {\(\varepsilon'_4\)};
    \vertex at (-2,0)  (d1) {\(\mathcal{F}_3^{(4)}\)};
    \vertex at (-1.5,0)  (d2) {\(=\)};
    \diagram* {
      (f1) --  (a) --  (f2),
      (a)  -- (b),
      (f3) -- (b) -- (f4)
    };
  \end{feynman}
\end{tikzpicture}
\caption{The four-point function $\braket{\sigma_1\sigma_2\sigma_3\varepsilon'_4}$}
\label{fig:F0}\end{figure}

Since this correlator does not require screening charges, we readily deduce the algebraic expression 
\be
\mathcal{F}_3^{(4)} = \eta_{12}^{\frac{k}{2(k+2)}} (\eta_{13} \eta_{23})^{\frac{k+1}{2(k+2)}}  (\eta_{14} \eta_{24} \eta_{34})^{-\frac{3k+1}{2(k+2)}} \ .
\ee
Using the explicit expressions \eqref{eq:A4} involving four-point amplitudes, we easily obtain the exchange matrices of the five-point amplitudes depicted in Figure \ref{fig:III-3} corresponding to exchanges between the positions $\eta_1$, $\eta_2$, and $\eta_3$.

\section{Free-field representation of WZW models}\label{app:B}
A straightforward way to evaluate correlators in the $SU(2)_q$ WZW model is through the Wakimoto free-field representation in which the WZW model is expressed in terms of a free boson field $\varphi$ and a ghost system consisting of boson $\beta$ and $\gamma$ fields. The central charge $c$ of the theory and background charge $\alpha_0$ are given in terms of the level $q$ of the WZW model as
\be
c=3-12 \alpha_0^2= \frac{3q}{q+2}
\ee
The primary fields $\Phi_m^j(z)$  depend on two parameters taking the values $j=0,\frac{1}{2},\dots,\frac{q}{2}$ and $m=-j,\dots,j$. In the free-field representation,
\bea
\Phi_m^j(z) = \gamma^{j-m}(z) e^{-2ij \alpha_0 \varphi(z)} \ .
\eea
To compute correlators, we also need the conjugate fields $\tilde{\Phi}_m^j(z) $ and screening charges $Q_+$. The conjugate of the highest-weight field is \cite{bib:39}
\be
\tilde{\Phi}_j^j(z) = \beta^{2j-q-1}(z) e^{2i (j-q-1) \alpha_0 \varphi(z)}
\ee
The other fields can also be expressed in terms of boson fields, but we will not need explicit expressions.
There are two possible screening charges. For our purposes, we need one of them,
\be
Q_+= \int dw \beta(w) e^{2i \alpha_0 \varphi(w)}
\ee
Concentrating on the primary fields with $j = \frac{1}{2}$, we simplify the notation by defining $\Phi_\pm \equiv \Phi_{\pm\frac{1}{2}}^{\frac{1}{2}}$.

In general, there are two conformal blocks in four-point correlators of fields with $j=\frac{1}{2}$, $X_{\mu, m_1m_2m_3m_4}$, where $\mu =1,2$ and $m_i=\pm$ ($i=1,2,3,4$). We obtain the non-vanishing amplitudes
\begin{widetext}
\bea\label{eq:B5}
X_{1,+--+} &=& \braket{\Phi_+(\eta_1) Q_+ \Phi_-(\eta_2) \Phi_-(\eta_3) \tilde\Phi_+(\eta_4)} \nonumber \\
&=& - \left[{x(1-x)}\right]^{\frac{1}{2(q+2)}} \int_0^x dw   \frac{\left[{w (x-w)(1-w)}\right]^{-\frac{1}{q+2}}}{x-w} - \left[{x(1-x)}\right]^{\frac{1}{2(q+2)}} \int_0^x dw   \frac{\left[{w (x-w)(1-w)}\right]^{-\frac{1}{q+2}}}{1-w}\nonumber \\
&=&   -\frac{ \Gamma \left(-\frac{1}{q+2}\right) \Gamma \left(\frac{q+1}{q+2}\right) }{\Gamma \left(\frac{q}{q+2}\right)} (1-x)^{\frac{1}{2 q+4}} x^{-\frac{3}{2 q+4}}  {}_2F_1\left(-\frac{1}{q+2},\frac{1}{q+2};\frac{q}{q+2};x\right) \ ,
\eea
\bea\label{eq:B6}
X_{1,--++} &=& \braket{\Phi_-(\eta_1) Q_+ \Phi_-(\eta_2) \Phi_+(\eta_3) \tilde\Phi_+(\eta_4)} \nonumber \\
&=& \left[{x(1-x)}\right]^{\frac{1}{2(q+2)}} \int_0^x dw   \frac{\left[{w (x-w)(1-w)}\right]^{-\frac{1}{q+2}}}{w} - \left[{x(1-x)}\right]^{\frac{1}{2(q+2)}} \int_0^x dw   \frac{\left[{w (x-w)(1-w)}\right]^{-\frac{1}{q+2}}}{x-w}\nonumber \\
&=& -\frac{ \Gamma \left(-\frac{1}{q+2}\right) \Gamma \left(\frac{q+1}{q+2}\right) }{q\Gamma \left(\frac{q}{q+2}\right)} (1-x)^{\frac{1}{2 q+4}} x^{\frac{2q+1}{2 q+4}} {}_2F_1\left(\frac{q+1}{q+2},\frac{q+3}{q+2};\frac{2 q+2}{q+2};x\right) \ ,
\eea
\bea\label{eq:B7}
X_{1,-+-+} &=& \braket{\Phi_-(\eta_1) Q_+ \Phi_+(\eta_2) \Phi_-(\eta_3) \tilde\Phi_+(\eta_4)} \nonumber \\
&=& \left[{x(1-x)}\right]^{\frac{1}{2(q+2)}} \int_0^x dw   \frac{\left[{w (x-w)(1-w)}\right]^{-\frac{1}{q+2}}}{w} - \left[{x(1-x)}\right]^{\frac{1}{2(q+2)}} \int_0^x dw   \frac{\left[{w (x-w)(1-w)}\right]^{-\frac{1}{q+2}}}{1-w}\nonumber \\
&=&   \frac{ \Gamma \left(-\frac{1}{q+2}\right) \Gamma \left(\frac{q+1}{q+2}\right) }{\Gamma \left(\frac{q}{q+2}\right)} (1-x)^{\frac{1}{2 q+4}} x^{-\frac{3}{2 q+4}}  {}_2F_1\left(\frac{1}{q+2},\frac{q+1}{q+2};\frac{q}{q+2};x\right) \ ,
\eea
\bea\label{eq:B8}
X_{2,+--+} &=& \braket{\Phi_+(\eta_1)  \Phi_-(\eta_2) \Phi_-(\eta_3)  Q_+ \tilde\Phi_+(\eta_4)} \nonumber \\
&=& \left[{x(1-x)}\right]^{\frac{1}{2(q+2)}} \int_1^\infty dw \frac{\left[{w (w-x) (w-1)}\right]^{-\frac{1}{q+2}} }{w-x} + \left[{x(1-x)}\right]^{\frac{1}{2(q+2)}} \int_1^\infty dw \frac{\left[{w (w-x) (w-1)}\right]^{-\frac{1}{q+2}} }{w-1}\nonumber \\
&=&  \frac{ \Gamma \left(-\frac{1}{q+2}\right) \Gamma \left(\frac{3}{q+2}\right) }{2 \Gamma \left(\frac{2}{q+2}\right)} (1-x)^{\frac{1}{2 q+4}} x^{\frac{1}{2 q+4}}  {}_2F_1\left(\frac{1}{q+2},\frac{3}{q+2};\frac{q+4}{q+2};x\right) \ ,
\eea
\bea\label{eq:B9}
X_{2,--++} &=& \braket{\Phi_-(\eta_1)  \Phi_-(\eta_2) \Phi_+(\eta_3)  Q_+ \tilde\Phi_+(\eta_4)} \nonumber \\
&=& \left[{x(1-x)}\right]^{\frac{1}{2(q+2)}} \int_1^\infty dw \frac{\left[{w (w-x) (w-1)}\right]^{-\frac{1}{q+2}} }{w} + \left[{x(1-x)}\right]^{\frac{1}{2(q+2)}} \int_1^\infty dw \frac{\left[{w (w-x) (w-1)}\right]^{-\frac{1}{q+2}} }{w-x}\nonumber \\
&=&   -\frac{\Gamma \left(-\frac{1}{q+2}\right) \Gamma \left(\frac{3}{q+2}\right)}{\Gamma \left(\frac{2}{q+2}\right)}  (1-x)^{\frac{1}{2 q+4}} x^{\frac{1}{2 q+4}} {}_2F_1\left(\frac{1}{q+2},\frac{3}{q+2};\frac{2}{q+2};x\right) \ , 
\eea
\bea\label{eq:B10}
X_{2,-+-+} &=& \braket{\Phi_-(\eta_1)  \Phi_+(\eta_2) \Phi_-(\eta_3)  Q_+ \tilde\Phi_+(\eta_4)} \nonumber \\
&=& \left[{x(1-x)}\right]^{\frac{1}{2(q+2)}} \int_1^\infty dw \frac{\left[{w (w-x) (w-1)}\right]^{-\frac{1}{q+2}} }{w} + \left[{x(1-x)}\right]^{\frac{1}{2(q+2)}} \int_1^\infty dw \frac{\left[{w (w-x) (w-1)}\right]^{-\frac{1}{q+2}} }{w-1}\nonumber \\
&=& \frac{\Gamma \left(-\frac{1}{q+2}\right) \Gamma \left(\frac{3}{q+2}\right)   }{2 \Gamma \left(\frac{2}{q+2}\right)} (1-x)^{-\frac{3}{2 (q+2)}} x^{\frac{1}{2 q+4}}  {}_2F_1\left(\frac{1}{q+2},\frac{q+1}{q+2};\frac{q+4}{q+2};x\right) \ .
\eea
\end{widetext}
Notice that only two of these functions are independent for each conformal block, because of the constraints
\be
X_{\mu,+--+} +X_{\mu,--++} +X_{\mu, -+-+} = 0 \ , \ \ \mu = 1,2 \ .
\ee
We obtain the expressions \eqref{eq:V-8} from the corresponding $X_{1,m_1m_2m_3m_4}$ for $q=1$. The second conformal block does not contribute in $SU(2)_1$ \cite{bib:25,bib:28}.
\end{document}